\newcommand{\C}{\mathcal{C}}
\newcommand{\eps}{\varepsilon}
\newcommand{\vf}{V\setminus V(F)}
\newcommand{\I}{\mathcal{I}}
\newcommand{\J}{\mathcal{J}}
\newcommand{\GF}{G\setminus F}
\newcommand{\bb}[1]{\#^a #1}
\newcommand{\G}{\mathbb{G}}
\newcommand{\Gr}{\mathcal{G}\!r}
\newcommand{\repeattheorem}[1]{%
  \begingroup
  \renewcommand{\thetheorem}{\ref{#1}}%
  \expandafter\expandafter\expandafter\theorem
  \csname reptheorem@#1\endcsname
  \endtheorem
  \endgroup
}
\xdef\csname reptheorem@#1\endcsname{%
    \unexpanded\expandafter{\BODY}%
  }%
\unskip\label{#1}\endtheorem
\newcommand{\repeatlemma}[1]{%
  \begingroup
  \renewcommand{\thelemma}{\ref{#1}}%
  \expandafter\expandafter\expandafter\lemma
  \csname replemma@#1\endcsname
  \endlemma
  \endgroup
}
\xdef\csname replemma@#1\endcsname{%
    \unexpanded\expandafter{\BODY}%
  }%
\unskip\label{#1}\endlemma
\spnewtheorem*{proofsketch}{Sketch of Proof}{\itshape}{\rmfamily}
   \def\spn@wtheorem#1#2#3#4{\@spothm{#1}[theorem]{#2}{#3}{#4}}
      \def\spn@wtheorem#1#2#3#4{\@spxnthm{#1}{#2}[section]{#3}{#4}}
         \def\spn@wtheorem#1#2#3#4{\@spynthm{#1}{#2}{#3}{#4}
                                   \@addtoreset{#1}{section}}
         \def\spn@wtheorem#1#2#3#4{\@spynthm{#1}{#2}{#3}{#4}
                                   \@addtoreset{#1}{chapter}}%
\begin{document}
	\title{$\beta$-Stars or On Extending a Drawing of a Connected Subgraph }
	\author{Tamara Mchedlidze\inst{1} \and J\'er\^ome Urhausen\inst{2}}

	\authorrunning{T. Mchedlidze, J. Urhausen}

	\institute{Karlsruhe Institute of Technology, \email{mched@iti.uka.de}\and
Utrecht University, \email{j.e.urhausen@uu.nl}}

	\maketitle

	\begin{abstract}
We consider the problem of extending the drawing of a subgraph of a given plane graph to a drawing of the entire graph using straight-line and polyline edges. We define the notion of star complexity of a polygon and show that a drawing $\Gamma_H$ of an induced connected subgraph $H$ can be extended with at most $\min\{ h/2, \beta + \log_2(h) + 1\}$ bends per edge, where $\beta$ is the largest star complexity of a face of $\Gamma_H$ and $h$ is the size of the largest face of $H$. This result significantly improves the previously known upper bound of $72|V(H)|$~\cite{chan2015drawing} for the case where $H$ is connected.
We also show that our bound is worst case optimal up to a small additive constant.
Additionally, we provide an indication of complexity of the problem of testing whether a star-shaped inner face can be extended to a straight-line drawing of the graph; this is in contrast to the fact that the same problem is solvable in linear time for the case of  star-shaped outer face~\cite{hong2008convex} and convex inner face~\cite{mchedlidze2016extending}.
	\end{abstract}

\section{Introduction}
\label{ch:introduction}

In this paper we study the problem of extending a given partial drawing of a graph. In particular, given a plane graph $G=(V,E)$, i.e. a planar graph with a fixed combinatorial embedding and a fixed outer face, a subgraph $H$ of $G$ and a planar straight-line drawing $\Gamma_H$ of $H$, we ask whether $\Gamma_H$ can be extended to a planar straight-line drawing of $G$ (see Figure~\ref{fig: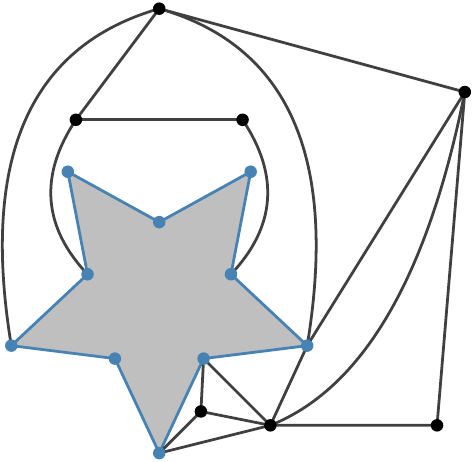}). We study both the decision question and the relaxed variation of using bends for the drawing extension.

\begin{figure}[tbh]
	\centering
	\includegraphics[scale=0.8]{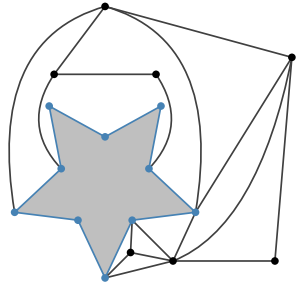}
	\caption{An embedding of a plane graph alongside a fixed drawing of an inner face (blue) as a star-shaped polygon (gray).}
	\label{fig:IntroInstance.pdf}
\end{figure} 

It is known that a drawing extension always exists even if $H=(V,\emptyset)$, where each edge is represented by a polyline with at most $120n$  bends, here $n=|V|$~\cite{pach2001embedding}. This bound was improved to $3n+2$ by Badent et al.~\cite{Badent08}. These upper bounds are asymptotically optimal as there are instances that require $\Omega(n)$ bends on $\Omega(n)$ edges~\cite{Badent08}.
In terms of the size of the pre-drawn graph $H$,  Chan et al.~\cite{chan2015drawing} showed that a drawing extension with $72|V(H)|$ bends per edge is possible for a general subgraph $H$. 

In order to pinpoint the source of multiple necessary bends for the drawing extension we define the notion of a $\beta$-star (resp.\ $\beta$-outer-star), a polygon where $\beta$ bends are necessary and sufficient to reach the kernel of the polygon (resp.\ infinity). We study the upper bounds on the number of bends in a drawing extension as a function of $\beta$. We show that a drawing $\Gamma_H$ of an induced connected subgraph $H$ can be extended with at most $\min\{ h/2, \beta + \log_2(h) + 1\}$ bends per edge if each face of $H$ is represented in $\Gamma_H$ as a $\beta$-(outer)-star and $h$ is the size of the largest face of $H$ (Theorem~\ref{th:ExtendArbitraryCycle}).  We show that this bound is worst case optimal up to a small additive constant.
We observe that in case both $G$ and $H$ are trees a closer to optimal bound of  $1+2\lceil|V(H)|/2\rceil)$ bends per edge had been provided by  Di Giacomo et al.~\cite{DiGiacomo09}.

In case a planar embedding is not provided as a part of the input, it is NP-hard to test whether a straight-line drawing extension exists~\cite{patrignani2006extending}. The problem is not known to belong to the class NP, as a possible solution may have coordinates which can not be represented with a polynomial number of bits~\cite{patrignani2006extending}. Very recently, Lubiw et al. have studied a related problem of drawing a graph inside a (not-necessarily simply connected) closed polygon~\cite{Lubiw2018}. They showed that this problem can not be shown to lie in NP by the mean of providing vertex coordinates, as these are sometimes irrational numbers. They have also shown that the problem is hard for the existential theory of  reals ($\exists \mathbb{R}$-hard) even if a planar embedding of the graph is provided as a part of the input. 
This problem would be equivalent to partial graph drawing extendability, if the polygon would be open, however this situation has not been investigated.  
Bekos et al.~\cite{BekosGD18,BekosArxiv17} have studied the problem of extending a given partial drawing of bipartite graphs, where one side of the bipartition is pre-drawn. They have shown that this problem lies in NP if each free vertex is required to lie in the convex hull of its pre-drawn neighbors.
Regarding drawing extensions with bends, it is NP-hard to test whether a drawing extension with at most $k$ bend per edge exists~\cite{BekosGD18,Goaoc2009}.  

Despite all the hardness results, it is long known that a straight-line drawing extension always exists if $H$ is the outer face and $\Gamma_H$ is a convex polygon~\cite{Chambers12,tutte1963draw}; and
$H$ is a chordless outer face and $\Gamma_H$ is a star-shaped polygon~\cite{hong2008convex}. An existence of a straight-line drawing extension can be checked by the mean of necessary and sufficient conditions in case where $H$ is an inner face and $\Gamma_H$ is a convex polygon~\cite{mchedlidze2016extending}.  As an extension of this work, and with the general goal to better understand the boundary between the easy and the difficult cases, we investigated the question of testing whether a straight-line drawing extension exists for an inner face $H$ drawn as a star-shaped polygon~$\Gamma_H$.
We observe that one can not test whether such an extension exists by just checking each vertex individually, as in the case for a convex inner face, and
show that there exists an instance such that the region where a vertex of $V(G)\setminus V(H)$ can lie to allow for a straight-line drawing extension is bounded by a curve of degree $2^{\Omega(|H|)}$(Theorem~\ref{th:ExpoComplBaseCase}).    
        
~\\
\noindent{\bf Contribution and Outline.}
We start with the necessary definitions in Section~\ref{ch:preliminaries}.
In Section~\ref{ch:StarOneBend}, we show that a star-shaped drawing of an inner face can be extended with at most 1 bend per edge. 
Section~\ref{ch:GeneralStars} is devoted to the study of generalizations of stars.
In Section~\ref{ch:beta-subsection}, we start with a generalization of star-shaped polygons to $\beta$-star and $\beta$-outer-star polygons ($\beta$ is referred to as  star complexity), and show that the number of bends per edge necessary for a drawing extension of an inner face $H$ with a star complexity $\beta$ is not bounded in terms of $\beta$ (Theorem~\ref{th:InnerStarAsInnerFaceLowerBound} and Theorem~\ref{th:lowerlog}). Motivated by the proof of Section~\ref{ch:StarOneBend} we define the notion of planar-$\beta$-star and planar-$\beta$-outer-star (this $\beta$ is referred to as planar star complexity) and show that the planar star complexity determines the number of bends per edge in a drawing extension (Theorem~\ref{th:ExtendPlanarInnerStar} and Theorem~\ref{th:ExtendPlanarOuterStar}).  In Section~\ref{ch:resolve-subsection}, we study the planar star complexity of an arbitrary simple polygon and the relationship between the star complexity and the planar star complexity of a polygon. In particular, we show that every $\beta$-star with $n$ vertices is a planar-$\beta\!+\!\delta$-star where $\delta\leq \log_2(n)$ (Theorem~\ref{th:OuterToPlanar}). 
In Section~\ref{ch:GeneralGraphs}, we state the implications of Section~\ref{ch:GeneralStars} to the drawing extension of (induced) connected subgraphs. In particular, we prove that a drawing $\Gamma_H$ of an induced connected subgraph $H$ can be extended with at most $\min\{ h/2, \beta + \log_2(h) + 1\}$ bends per edge if the star complexity of $\Gamma_H$ is $\beta$ and $h$ is the size of the largest face of~$H$ (Theorem~\ref{th:ExtendArbitraryCycle}). Last but not least, in Section~\ref{ch:NoBends} we provide an indication of complexity of the problem of testing whether a star-shaped inner face $H$ admits a straight-line drawing extension. In particular, we prove that there exists an instance such that the region where a vertex of $V(G)\setminus V(H)$ can lie to allow a straight-line drawing extension is bounded by a curve of degree $2^{\Omega(|H|)}$ (Theorem~\ref{th:ExpoComplBaseCase}).
All omitted proofs can be found in the appendix.

\section{Preliminaries}
\label{ch:preliminaries}


\paragraph{Basic Geometric Terms.}
The segment (resp.\ line) induced by two points $a$ and $b$ is designated by $s(a,b)$ (resp.\ $l(a,b)$). We denote a curve between $a$ and $b$ by $c(a, b)$. We refer to the ray along $l(a,b)$ starting at $a$ and (not) containing $b$ as $r(a,b)$  ($q(a,b)$). 
For a polyline $c$, $\#c$ designates the number of bends on $c$.

Let $P$ be a polygon. Two points $a,b$ \emph{see} each other if the open segment $s(a,b)$ does not intersect the boundary of $P$. A simple polygon $P$ is \emph{convex} if each pair of points inside $P$ see each other.
A simple polygon $P$ is \emph{star-shaped} or a \emph{star} if there is a non-empty set of points $K$ called the \emph{kernel} inside the polygon such that
any point of the kernel can see any vertex of the polygon. By assuming that the vertices of $P$ are in general position, we have that a kernel of $P$  contains an open ball of positive radius.

\paragraph{Graphs and Drawings of Graphs.}
A \emph{drawing} $\Gamma$ of a graph is a function that assigns to each vertex a unique point in the plane and to each edge $\{a,b\}$ a curve connecting the points assigned to $a$ and $b$. A drawing is \emph{straight-line} (resp.\ \emph{$k$-bend}) if each edge is drawn as a segment (resp.\ a polyline with at most $k$ bends).
A graph  is \emph{planar} if it has a \emph{planar} drawing, i.e. a drawing without edge crossings. A planar drawing $\Gamma$
subdivides the plane into connected regions called \emph{faces}; the
unbounded region is the \emph{outer} and the other regions are the
\emph{inner} faces. The cyclic
ordering of the edges around each vertex of $\Gamma$ together with the
description of the outer face of $\Gamma$ characterize a class of drawings with the same combinatorial properties, which is called an \emph{embedding}
of $G$. A planar graph $G$ with a planar embedding is called \emph{plane graph}. A \emph{plane subgraph} $H$ of $G$ is a subgraph of $G$
together with a planar embedding that is the restriction of the
embedding of $G$ to~$H$.
A plane graph $G$ is (\emph{internally}) \emph{triangulated} if each (inner) face of $G$ is a triangle.
For a given cycle, a chord is an edge between two non-consecutive vertices of the cycle.

Let $G$ be a plane graph and let $H$ be a plane subgraph of 
$G$.  Let $\Gamma_H$ be a planar straight-line drawing of $H$.
We say that the instance $(G, \Gamma_H)$ admits a \emph{$k$-bend} (resp.\ \emph{straight-line}) \emph{extension} if drawing $\Gamma_H$ can be 
completed to a planar $k$-bend (resp.\ straight-line) drawing $\Gamma_G$ of the plane graph $G$.  We refer to $k$ as the \emph{curve complexity} of the drawing $\Gamma_G$.

For a given graph $G=(V,E)$, let $N(v)=\{w\in V\mid \{v,w\}\in E\}$ be the \emph{neighbors} of $v\in V$.
For a plane graph $G$ and a face $F$, let $N_F(v)=N(v)\cap F=(w_1,w_2,\dots,w_\ell)$ be the sequence of neighbors of $v$ that belong to $F$.
For $v$ outside $F$, let the list $N_F(v)$ be ordered clockwise around $F$ with $w_1$ chosen such that the area delimited by the cycle $C$ composed of edges $\{v, w_1\}$, $\{v, w_\ell\}$ and the clockwise path $H$ from $w_1$ to $w_\ell$ in $F$ does not contain $F$ (see Figure~\ref{fig:NeighborsInF-PutVertexOnRay}a).
A vertex $z\in \vf$ lying in the cycle $C$ is said to be \emph{enclosed} by vertex $v$.

Let $F$ be a face of $G$ and $\Gamma_F$ its planar drawing. 
The \emph{feasibility area} of a vertex $v\in \vf$ is the set of all possible positions of $v$, such that 
the implied straight-line drawing of $F\cup \{v\}$  can be extended to a planar straight-line drawing of $V(F)\cup \{v\}\cup Q_v$, where $Q_v$ is the set of all vertices enclosed by $v$.

	\section{Star-shaped polygons }\label{ch:StarOneBend}

Let $G$ be a plane graph with $n$ vertices, $F$ be a chordless face of $G$ with $h$ vertices and $\Gamma_F$ a star-shaped drawing of $F$.
In this section we prove that the instance $(G,\Gamma_F)$ admits a $1$-bend-extension. While the proof itself is rather straight-forward, we still present it here as it motivates a specific way to generalize star-shaped polygons by considering planarity issues. 

In our construction we place vertices $\vf$  one by one  with the property that a vertex is placed only after all vertices enclosed by it have  already been placed. This property is achieved by a canonical ordering~\cite{kant1996drawing} that lists vertices starting from the face $F$. The following lemma can be proven along the same lines as the existence of a usual canonical ordering~\cite{kant1996drawing}. We say $\GF$ is triangulated if each face of $G$ is triangulated with the exception of the face $F$.

\begin{lemma}\label{th:ordering}
Let $G=(V,E)$ be a plane graph, $|V|=n$, and let $F$ be an inner face with $h$ vertices of $G$, such that $\GF$ is triangulated.
There is an ordering $\J=(v_1,\dots,v_{n-h})$ of the vertices of $\vf$, such that for each $j$, $1\leq j\leq n-h$, the following holds:
$(1)$ the graph $G_j$ induced by the vertices $\{v_1,\dots,v_j\}\cup F$ is biconnected, $(2)$ $G_j\setminus F$ is internally triangulated,
$(3)$ $v_{j+1}$ lies in the outer face of $G_j$, $(4)$ vertices $N(v_{j+1}) \cap V(G_j)$ belong to the outer face of $G_j$.
\end{lemma}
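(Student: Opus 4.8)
The plan is to construct $\J$ in reverse by a shelling argument, in the spirit of the classical canonical-ordering proof~\cite{kant1996drawing}, but peeling vertices off the \emph{outer} boundary of $G$ until only the inner cycle $F$ survives. I would produce a chain $G=G_{n-h}\supset G_{n-h-1}\supset\cdots\supset G_0=F$ maintaining the invariant that each $G_j$ is biconnected, has $F$ as an inner face with all other inner faces triangles, and has its outer face bounded by a simple cycle $C_j$. Letting $v_j$ be the vertex deleted when passing from $G_j$ to $G_{j-1}$, the sequence $(v_1,\dots,v_{n-h})$ is the required ordering; conditions $(1)$ and $(2)$ are exactly the invariant, while $(3)$ and $(4)$ express that re-inserting $v_{j+1}$ places it in the outer face of $G_j$ with its already-placed neighbours forming the consecutive arc of $C_j$ it was deleted from. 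The hypothesis that $\GF$ is triangulated makes the chain start from the outer triangle of $G$, and $G_0=F$ is reached precisely when no vertex of $\vf$ remains.

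The core step is, for each $G_j$ with $j\ge1$, to exhibit a vertex $v\in V(C_j)\setminus V(F)$ that is \emph{removable}: chord-free on $C_j$ (its only neighbours on $C_j$ are its two cycle-neighbours) and not a cut vertex. The key observation that adapts the classical chord-nesting argument to the annular situation is that every chord $xy$ of $C_j$ has an $F$-free side: since $F$ is a face, its region lies entirely in one of the two disks into which $xy$ splits the interior of $C_j$, and the other disk is an $F$-free near-triangulation. Running the usual innermost-chord argument inside that $F$-free disk yields a chord-free vertex in the interior of the subtended arc, and this vertex is automatically outside $V(F)$, because every vertex of $F$ borders the face $F$ and hence cannot lie on the $F$-free side. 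If $C_j$ has no chord at all, then $C_j$ is induced and any vertex of the nonempty set $V(C_j)\setminus V(F)$ is chord-free.

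The step I expect to cost the most care is verifying that deleting such a $v$ genuinely preserves the invariant — that $G_{j-1}$ stays biconnected, that $F$ remains an intact chordless inner face, and that the only faces changed are the triangles of the fan at $v$, which are absorbed into the outer face. Because $v$ is chord-free, its removal simply replaces $v$ on $C_j$ by the path through its interior neighbours, so all exposed faces were triangles and no new chord of $F$ is created; biconnectivity of $G_j$ also rules out $F$ touching $C_j$ at a lone pinch vertex, so no deletion can turn a vertex of $F$ into a cut vertex. I would first check that $V(C_j)\setminus V(F)\neq\emptyset$ for $j\ge1$ — immediate, since $F$ is an inner face and every added vertex lies in its exterior, so $C_j=F$ is impossible while a vertex of $\vf$ is present — which guarantees the shelling never stalls before reaching $G_0=F$. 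Reversing the chain then delivers the ordering with all four properties.
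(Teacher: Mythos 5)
Your overall plan---reverse deletion maintaining biconnectivity and triangular inner faces other than $F$, with the ``$F$-free side'' refinement of the innermost-chord argument---is exactly the adaptation of Kant's proof that the paper has in mind (the paper gives no written proof, only a pointer to the canonical-ordering technique), and the two delicate steps you do argue are sound: the chord-free vertex found strictly inside the arc of an innermost chord on the $F$-free side automatically avoids $V(F)$, and deleting it preserves the invariant. The genuine gap is in the step you dismiss as immediate: the claim that $V(C_j)\setminus V(F)\neq\emptyset$ whenever some vertex of $\vf$ remains. What your argument actually shows is that $C_j\neq F$; but $C_j\neq F$ does not imply $V(C_j)\not\subseteq V(F)$, because the outer cycle may consist entirely of vertices of $F$ by travelling along a \emph{chord} of $F$. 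Concretely: let $F$ be the quadrilateral $abcd$, add the chord $ac$, let the outer face be the triangle $abc$, and place a vertex $e$ inside the triangle $acd$ adjacent to $a$, $c$, $d$. Then $\GF$ is triangulated and $e\in\vf$, yet the outer cycle is the triangle $abc$, which has no chords and satisfies $V(C_j)=\{a,b,c\}\subseteq V(F)$; your shelling stalls at the very first deletion.

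Moreover, this gap cannot be closed by a cleverer argument at the stated level of generality, because the lemma as written is false when $F$ has chords: adding to the instance above a further vertex $f$ inside the triangle $ace$ adjacent to $a$, $c$, $e$ gives a graph in which neither ordering of $\{e,f\}$ works (taking $v_1=e$ places $f$ inside the inner face $ace$ of $G_1$, violating $(3)$; taking $v_1=f$ leaves the non-triangular inner face $afcd$ in $G_1$, violating $(2)$). The lemma is only invoked, in Theorem~\ref{th:OneBend}, under the standing assumption of Section~\ref{ch:StarOneBend} that $F$ is \emph{chordless}---a hypothesis the lemma statement omits and your proof never uses. Under that hypothesis your non-stalling step is easily repaired: $G_j$ is an induced subgraph of $G$, so any edge of $C_j$ joining two vertices of $F$ is an edge of $G$ between vertices of $F$ and hence an edge of the cycle $F$ itself; thus $V(C_j)\subseteq V(F)$ would force the edge set of $C_j$ inside that of $F$, giving $C_j=F$, whence both sides of this cycle are faces of $G_j$ and $G_j$ is the bare cycle, i.e.\ $j=0$. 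Make the chordlessness assumption explicit and insert this argument; the rest of your proof then goes through.
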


\begin{theorem}\label{th:OneBend}
Each instance $(G=(V, E),\Gamma_F)$ where $\Gamma_F$ is a star-shaped drawing of a chordless inner face $F$  allows a $1$-bend-extension.
\end{theorem}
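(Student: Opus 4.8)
The plan is to place the vertices of $\vf$ one at a time in the order $\J=(v_1,\dots,v_{n-h})$ guaranteed by Lemma~\ref{th:ordering}, using exactly one bend per edge, and to maintain the invariant that after placing $v_j$ the current drawing of $G_j$ is planar and that the kernel $K$ of $\Gamma_F$ remains ``visible'' from the newly placed vertices through the not-yet-completed region. The key geometric idea is to exploit the star-shapedness of $\Gamma_F$: fix a point $k$ in the kernel of $\Gamma_F$, which by the general-position assumption contains an open ball of positive radius. Since $k$ sees every vertex of $F$, and more generally every vertex placed so far can be connected to $k$ through the interior, the strategy is to route each edge toward a position that keeps $k$ visible.

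First I would set up the inductive placement. When it is time to place $v_{j+1}$, by properties (3) and (4) of Lemma~\ref{th:ordering} the vertex $v_{j+1}$ lies in the outer face of $G_j$ and all its already-placed neighbors $N(v_{j+1})\cap V(G_j)=(w_1,\dots,w_\ell)$ lie on the outer boundary of the current drawing. I would position $v_{j+1}$ at a point in a thin region just outside the outer boundary of $G_j$, close to the kernel point $k$ but on the correct side, so that $v_{j+1}$ itself sees $k$ and the straight segments from $v_{j+1}$ to each $w_i$ stay inside the region bounded by the outer face and do not cross existing edges. The edges incident to $v_{j+1}$ whose other endpoint was placed earlier already have their single bend fixed; the new edges from $v_{j+1}$ to $w_1,\dots,w_\ell$ are drawn with (at most) one bend each, routing first from $v_{j+1}$ toward the vicinity of $k$ and then to $w_i$, which is possible precisely because both $v_{j+1}$ and every $w_i$ see $k$.

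The heart of the argument, and the main obstacle, is verifying that one bend per new edge suffices simultaneously for all of $w_1,\dots,w_\ell$ without crossings. The difficulty is that the edges $\{v_{j+1},w_i\}$ must fan out from $v_{j+1}$ into the same wedge near the outer boundary while the drawing of $G_j$ already occupies the interior. I would resolve this by choosing $v_{j+1}$ and the bend points to lie in a sufficiently small neighborhood of $k$, so that all segments from the bends to the $w_i$ are nearly parallel to the corresponding kernel segments $s(k,w_i)$, which are crossing-free by the definition of the kernel; a continuity/perturbation argument then shows the actual routed polylines remain crossing-free. Because $F$ is chordless, the outer boundary of $G_j$ restricted to $F$ is a simple path between $w_1$ and $w_\ell$, so no chord of $F$ obstructs this fanning. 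The placement and routing must also respect the enclosure structure: by the ordering, every vertex enclosed by $v_{j+1}$ is already placed, so the region into which we route is fully determined and the one-bend polylines can be confined to it.

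Finally I would handle the termination and the outer face. After all $n-h$ vertices are placed, the drawing $\Gamma_G$ is planar, restricts to $\Gamma_F$ on $F$, and uses at most one bend per edge by construction. The only edges needing care are those of $F$ itself, which are drawn straight and unchanged, and the edges of the outer face of $G$, which are closed off last; here the freedom to place the final vertices far from $k$ (or to route their bends outward) ensures planarity is preserved. I expect the crossing-freeness of the simultaneous fanning near $k$, combined with the bookkeeping that the kernel point remains visible from the active outer boundary throughout the induction, to be the step requiring the most care; once the geometric visibility through $k$ is established, the one-bend bound follows directly from connecting each endpoint to its neighbor via a common visible point.
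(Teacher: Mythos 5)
Your proposal has the geometry inverted, and this breaks the whole construction. Since $F$ is an \emph{inner} face, every vertex of $\vf$ and every edge of $G$ not on $F$ must be drawn in the \emph{exterior} of the polygon $\Gamma_F$: the interior of $\Gamma_F$ is a face of the final planar drawing and must remain empty. The kernel point $k$ lies inside $\Gamma_F$, so a vertex $v_{j+1}$ placed outside the current boundary can never ``see'' $k$ (any segment from $v_{j+1}$ to $k$ crosses the polygon boundary), and routing edges ``toward the vicinity of $k$'' or through ``a common visible point'' near $k$ would put edges inside the face $F$, which is forbidden. Your routing idea --- connect both endpoints to a point that sees everything --- is exactly the right intuition for a star-shaped \emph{outer} face (where the graph is drawn inside the polygon, and in fact straight lines suffice by Hong--Nagamochi), but it cannot be made to work for an inner face. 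Likewise, placing $v_{j+1}$ ``just outside the outer boundary, close to the kernel point $k$'' is self-contradictory, and segments ``nearly parallel to the kernel segments $s(k,w_i)$'' would lie inside or cross the polygon.

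The paper's proof uses the kernel in the opposite way: with $p$ a kernel point of the current outer boundary $P_j$, star-shapedness guarantees that for each boundary vertex $w$ the ray obtained by extending $s(p,w)$ beyond $w$ lies entirely \emph{outside} $P_j$, and these rays are pairwise disjoint (they are radial from $p$). The new vertex $v_{j+1}$ is placed \emph{far away} on the bisector of the angle spanned by the rays through its extreme neighbors $w_1$ and $w_\ell$; from there it sees a point $a_i$ on each intermediate ray, and each edge $\{v_{j+1},w_i\}$ is drawn as $s(w_i,a_i)$ followed by $s(a_i,v_{j+1})$, i.e.\ with one bend on the ray. The invariant maintained is also different from yours: not that $k$ stays visible, but that the outer face of the partial drawing $G_j$ is itself a star-shaped polygon whose kernel $K_{j+1}=K_j\cap Q$ (with $Q$ the quadrant of the lines $l(p,w_1)$, $l(p,w_\ell)$ containing $v_{j+1}$) remains a non-empty open set. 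Without this ``outward ray'' mechanism and the shrinking-kernel invariant, your continuity/perturbation argument has nothing valid to perturb from, so the proposal as written cannot be repaired locally; it needs the reversed construction.
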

\begin{proof}
\begin{figure}[t]
\centering
\includegraphics[scale=0.8]{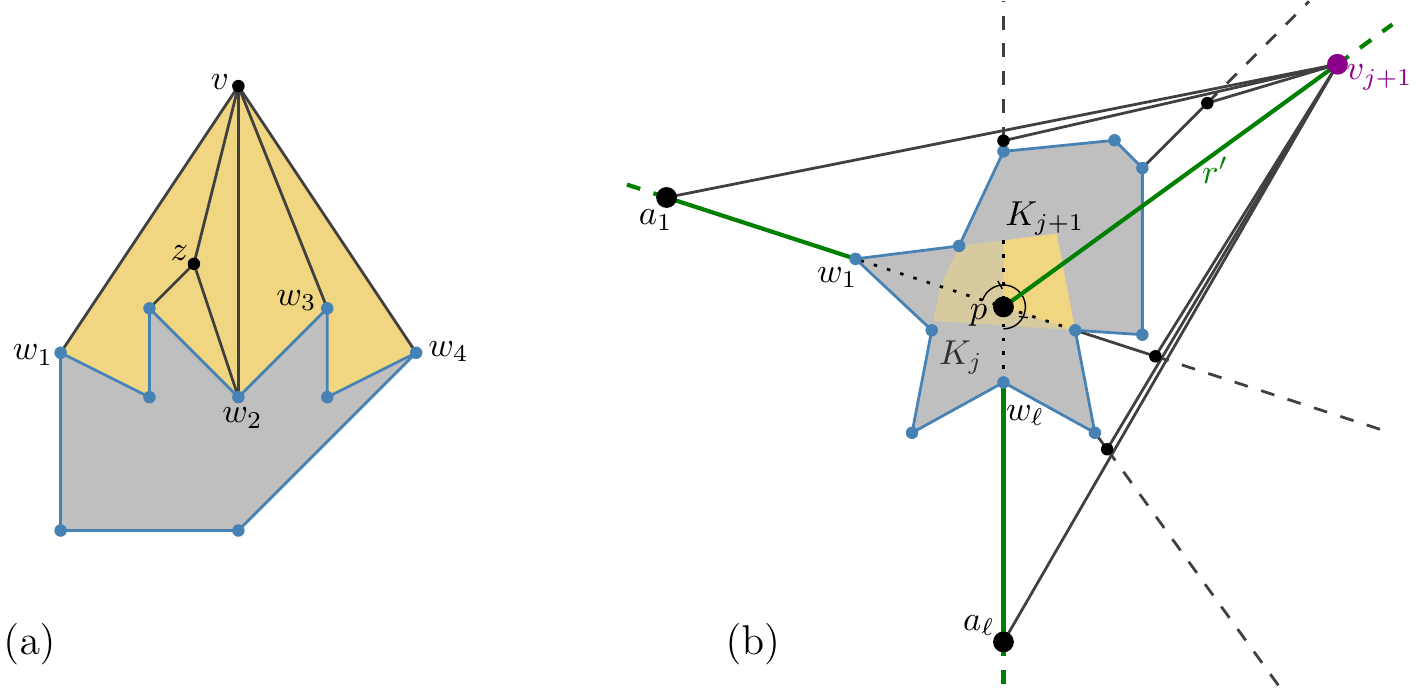}
\caption{ (a) Yellow area contains vertices enclosed by $v$. (b) Proof of Theorem~\ref{th:OneBend}.}
\label{fig:NeighborsInF-PutVertexOnRay}
\end{figure}

We start with triangulating $G$ by placing a vertex in each non-triangular face and connecting it to the vertices of the face. 
We delete the added vertices and edges after the triangulated graph has been drawn. We refer to the new graph as $G$ as well.
Let~$\J=(v_1,\dots,v_{n-h})$ be an ordering of the vertices $\vf$ as defined by Lemma~\ref{th:ordering}.
For $1\leq j\leq n-h=|V(G)|-|V(F)|$, let $G_j$ be the graph as defined by Lemma~\ref{th:ordering} and let $F_j$ be the outer face of $G_j$. Additionally we set $G_0=F_0=F$.

We prove the theorem by induction.
Assume that for a $0\leq j\leq n-h$ we have a drawing of $G_j$, such that $F_j$ forms a star-shaped polygon $P_j$ with kernel~$K_j$.
This is true for $j=0$. 
Let $v_{j+1}$ be the next vertex according to $\J$ and let $p$ be a point of the kernel of the already drawn star-shaped polygon $P_j$.
For each $w\in N_{F_j}(v_{j+1})$ consider the ray $q(p,w)$. Due to $P_j$ being star-shaped and due to property (4) of Lemma~\ref{th:ordering}, they all lie outside of $P_j$. Since $G_j$ is biconnected, $v_{j+1}$ has at least two neighbors, i.e. $\ell=|N_{F_j}(v_{j+1})|\geq 2$.

Now we consider the ray $r'$ that is the bisector of the clockwise angle formed by the rays $q(p,w_1)$ and $q(p,w_\ell)$, see Figure~\ref{fig:NeighborsInF-PutVertexOnRay}b.
If we place $v_{j+1}$ sufficiently far away from $p$ on $r'$, $v_{j+1}$ sees $q(p,w_1)$ and $q(p,w_\ell)$, i.e. $\exists a_1\in q(p,w_1), a_\ell\in q(p,w_\ell)$, with $s(v_{j+1},a_1)\cap P_j=\emptyset=s(v_{j+1},a_\ell)\cap P_j$. This is due to the fact that the angles between $q(p,w_1)$ and $r'$ and between $q(p,w_\ell)$ and $r'$ are strictly smaller than $\pi$.

Since $v_{j+1}$ is between $q(p,w_1)$ and $q(p,w_\ell)$, $v_{j+1}$ also sees a point $a_i$ on the ray $q(p,w_i)$, $i=2,\dots,\ell-1$.
For each $i\in \{1,\dots,\ell\}$ we draw the edge $\{w_i,v_{j+1}\}$ using the segments $s(w_i,a_i)$ and $s(a_i, v_{j+1})$. Observe that the points $a_1,\dots,a_\ell$ should be chosen so that they appear around $v_{j+1}$ in a counterclockwise order.

The lines $l(p,w_1)$ and $l(p,w_\ell)$ separate the plane into four quadrants. The new kernel $K_{j+1}$ of the polygon $P_{j+1}$ is the intersection of the old kernel $K_j$ and the quadrant containing $v_{j+1}$.
Since the kernel $K_j$ was an open set, $p$ could not have been on the boundary of $K_j$, therefore $K_{j+1}$ is a non-empty open set.\qed
\end{proof}

We observe that, according to the proof of Theorem~\ref{th:OneBend}, the class of the polygons that allows a $1$-bend-extension is wider than stars. In particular, these are the polygons from the vertices of which we can shoot rays to infinity which neither intersect mutually nor intersect the polygon itself. We call such polygons \emph{planar outer-stars}. This gives the following: 

\begin{corollary}\label{th:corollary-outer-one-bend}
	Each instance $(G,\Gamma_F)$ where  $F$ is a chordless inner face and $\Gamma_F$ is a planar outer-star,  allows a $1$-bend-extension. 
\end{corollary}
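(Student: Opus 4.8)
The plan is to run the induction of Theorem~\ref{th:OneBend} almost verbatim, replacing the single kernel point $p$ and its rays $q(p,w)$ by the individual escape rays that come with a planar outer-star. For a vertex $w$ of the current outer boundary I write $\rho_w$ for its ray to infinity, and the invariant I maintain is: after $j$ insertions the outer face $F_j$ of $G_j$ is drawn as a planar outer-star $P_j$, i.e.\ every vertex $w$ of $P_j$ carries a ray $\rho_w$ to infinity, and these rays are pairwise non-crossing and disjoint from $P_j$. The base case $j=0$ is exactly the hypothesis that $\Gamma_F$ is a planar outer-star. As in Theorem~\ref{th:OneBend} I first triangulate $\GF$, fix the canonical ordering $\J=(v_1,\dots,v_{n-h})$ of Lemma~\ref{th:ordering}, and note that the neighbours $N_{F_j}(v_{j+1})=(w_1,\dots,w_\ell)$ form a contiguous clockwise arc of $P_j$ with $\ell\geq 2$.

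For the insertion step I use the rays $\rho_{w_1},\dots,\rho_{w_\ell}$ in place of $q(p,w_1),\dots,q(p,w_\ell)$. Together with the arc $w_1,\dots,w_\ell$, the two extreme rays $\rho_{w_1}$ and $\rho_{w_\ell}$ bound an unbounded region $W$; since the rays are non-crossing and do not meet $P_j$, all intermediate rays $\rho_{w_2},\dots,\rho_{w_{\ell-1}}$ lie inside $W$, while the ray of every other boundary vertex stays outside $W$. Because the $\rho_{w_i}$ escape to infinity without crossing, they leave any large enclosing circle in the same clockwise order as their feet, so deep inside $W$ there is an unbounded sub-region into which I can push $v_{j+1}$. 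Placing $v_{j+1}$ far enough out there (along the analogue of the bisector $r'$), it sees a point $a_i$ on each $\rho_{w_i}$, the $a_i$ occur in counterclockwise order around $v_{j+1}$, and I route each edge $\{w_i,v_{j+1}\}$ as the two segments $s(w_i,a_i)$ and $s(a_i,v_{j+1})$ --- one bend apiece, exactly as before. Planarity of the new edges follows from the non-crossing of the $\rho_{w_i}$ and the ordering of the $a_i$, verbatim from the proof of Theorem~\ref{th:OneBend}.

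It remains to restore the invariant for the new outer face $P_{j+1}$, obtained from $P_j$ by replacing the arc $w_1,\dots,w_\ell$ with $w_1,v_{j+1},w_\ell$ (so $w_2,\dots,w_{\ell-1}$ become interior). For every vertex different from $w_1,v_{j+1},w_\ell$ I keep its old ray: such a ray lives outside $W$, whereas $v_{j+1}$ and all the segments $s(a_i,v_{j+1})$ lie inside $W$, so no new crossing is created and the ray still avoids $P_{j+1}$. For $v_{j+1}$ I take the continuation of its placement direction to infinity; as $v_{j+1}$ is the outermost point, this ray meets neither $P_{j+1}$ nor any other ray. The only genuinely new bookkeeping concerns $w_1$ and $w_\ell$: their old rays now run along the freshly drawn boundary edges to $v_{j+1}$ and can no longer be used, so I reroute them by a small outward rotation into the open wedge between $\rho_{w_1}$ (resp.\ $\rho_{w_\ell}$) and the ray of the adjacent off-arc vertex, which keeps them non-crossing and disjoint from $P_{j+1}$.

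The one step that needs more than a transcription of Theorem~\ref{th:OneBend} is the replacement of the common-apex fan by rays with distinct origins: without a shared apex I cannot appeal to the clockwise angular order around $p$, so the main work is the clean argument that consecutive non-crossing rays to infinity possess a consistent order at infinity and cut out a single unbounded region in which $v_{j+1}$ can be parked, together with the rerouting of the two extreme rays that re-establishes the planar-outer-star invariant. Everything else --- the one-bend routing, the planarity of the inserted edges, and the induction skeleton --- carries over unchanged, giving the claimed $1$-bend-extension.
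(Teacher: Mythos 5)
Your overall strategy is the one the paper intends: the corollary appears there as a direct observation that the proof of Theorem~\ref{th:OneBend} only ever uses rays to infinity that avoid the polygon and each other, so rerunning that induction with the given ray system in place of the radial rays $q(p,w)$ is exactly the right plan, and you correctly isolate the two points that need new arguments (consistent order of the rays at infinity, and restoring the invariant). However, your restoration of the invariant at $w_1$ and $w_\ell$ has a genuine gap. You route the edge $\{w_1,v_{j+1}\}$ along $\rho_{w_1}$ itself, so $\rho_{w_1}$ is consumed by the boundary of $P_{j+1}$, and then claim $w_1$ gets a fresh ray ``by a small outward rotation into the open wedge between $\rho_{w_1}$ and the ray of the adjacent off-arc vertex $u$''. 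This fails because two non-crossing straight rays need not diverge: $\rho_{w_1}$ and $\rho_u$ may be parallel, in which case the region between them is a half-strip of bounded width. A ray from $w_1$ rotated by any angle $\eps>0$ into that strip drifts away from $\rho_{w_1}$ linearly while the strip width stays constant, so it must cross $\rho_u$; and the direction of $\rho_{w_1}$ itself is unavailable, since the new edge $s(w_1,a_1)$ lies exactly on it, while all directions on the other side point into the interior of $P_{j+1}$ (the newly enclosed region). Since you keep the rays of all other boundary vertices --- in particular $\rho_u$ --- unchanged, there is no valid replacement ray for $w_1$ at all in this configuration, and the planar-outer-star invariant is destroyed.

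The fix is to not consume the extreme rays in the first place: place the bend of the edge $\{w_1,v_{j+1}\}$ not on $\rho_{w_1}$ but rotated slightly into $W$ (similarly at $w_\ell$), so that all new edges lie strictly inside $W$. Then every surviving boundary vertex of $P_{j+1}$, including $w_1$ and $w_\ell$, keeps its original ray verbatim: the ray is disjoint from the new edges because they lie strictly on the $W$-side of it, and it still emanates into the exterior wedge of $P_{j+1}$; only $v_{j+1}$ needs a new ray, inside $W$. (Alternatively one could strengthen the invariant to ``consecutive rays strictly diverge'', but then you owe an extra lemma that every planar outer-star admits such a ray system.) It is worth seeing why Theorem~\ref{th:OneBend} never faces this problem: there the invariant is the \emph{kernel}, and at each step the rays are regenerated from a fresh kernel point $p'\in K_{j+1}$ with $p'\neq p$, so star-shapedness automatically certifies the new rays even though the old ones were used up by edges. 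Without a kernel, the ray system must be preserved explicitly, and as written your induction does not preserve it.
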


	\section{Generalization of stars}\label{ch:GeneralStars}

In this section we generalize the notion of stars and planar outer-star polygons and investigate the lower and upper bounds for the number of bends per edge in the drawing extensions. 

\subsection{$\beta$-Stars}
\label{ch:beta-subsection}
A simple polygon $P$ is a \emph{$\beta$-star} if there is an open set of points $K$ called the \emph{kernel} inside $P$ with the following property:
for each point $p\in K$ and for each vertex $v$ of $P$ there is a polyline $c(v)$ connecting $v$ and $p$ with at most $\beta$ bends such that $c(v)$ touches $P$ only at $v$. 
The smallest such $\beta$ is referred to as \emph{star complexity} of the polygon $P$.
This set of curves is referred to as \emph{curve-set} $\C$ of $P$ and $p$ is the \emph{center} of $\C$.
In the literature this kernel is also known as the link center of the polygon and it can be calculated in $O(n\log n)$ time~\cite{djidjev1992ano}.
The straight-forward extension of this definition to act ``outside'' the polygons is as follows: a simple polygon $P$ is a \emph{$\beta$-outer-star} if for each vertex $v$ of $P$ there is an infinite polyline $c(v)$ outside of $P$ starting at $v$ with at most $\beta$ bends. The smallest such $\beta$ is referred to as \emph{outer star complexity} of the polygon $P.$
Again, $\C=\{c(v)\mid v\in P\}$ is called \emph{curve-set}. The \emph{center} of this set is a point at infinity.
One can think about $\beta$-outer-star as of $\beta$-star with the kernel  in infinity.

While $\beta$-star and $\beta$-outer-star are straight-forward ways to extend the notion of a star inside and outside, and these definitions capture an inherent complexity of the polygon,   
we can show that restricting the fixed inner face to be a $1$-star is not sufficient to ensure a $c$-bend-extension for any constant $c$  (Theorem~\ref{th:InnerStarAsInnerFaceLowerBound}).
Even more, restricting the fixed inner face to a $\beta$-outer-star still does not imply the existence of a $c\!+\!\beta$-bend-extension for any constant $c$ (Theorem~\ref{th:lowerlog}).

\begin{reptheorem}{th:InnerStarAsInnerFaceLowerBound}
There exist instances $(G,\Gamma_F)$ where $F$ is an inner face with $h$ vertices and $\Gamma_F$ is a $1$-star such that any drawing extension of $(G,\Gamma_F)$ contains an edge with at least $\lfloor\frac{h-3}{2}\rfloor$ bends.
\end{reptheorem}

\begin{reptheorem}{th:lowerlog}
There exist instances $(G,\Gamma_F)$ where   $F$ is an inner face  with $h$ vertices and $\Gamma_F$ is a $\beta$-outer-star such that any drawing extension of $(G,\Gamma_F)$  has an edge with at least $\beta+ \log_2(\frac{h+5}{6})+1$ bends.
\end{reptheorem}

The above lower bounds and the fact that a planar outer-star admits an extension with one bend per edge guided us to extend definitions of $\beta$-star and $\beta$-outer star to include planarity.  
A simple polygon $P$ is a \emph{planar-$\beta$-star} if there is an open set of points $K$ called the kernel inside $P$ with the following property: for a fixed point $p\in K$ and for each vertex $v$ of $P$ there is an oriented polyline $c(v)$ inside $P$ from $v$ to $p$ with at most $\beta$ bends such that for any $v$ and $v'$, $c(v)$ and $c(v')$ share the single point $p$.
	
A simple polygon $P$ is a \emph{planar-$\beta$-outer-star} if for each vertex $v$ of $P$ there is an oriented infinite polyline $c(v)$ outside of $P$ starting at $v$ with at most $\beta$ bends such that for any $v$ and $v'$, $c(v)$ and $c(v')$ neither cross nor touch each other.
The smallest such $\beta$ is referred to as \emph{planar (outer) star complexity} of the polygon $P$.
The set of curves are referred to as \emph{planar curve-set}
centered at the fixed point $p$.
Figure~\ref{fig:DifferencePlanar} in the appendix shows that in general a $\beta$-outer-star is not a planar-$\beta$-outer-star.
Due to these definitions the following two theorems can be proven.



\begin{reptheorem}{th:ExtendPlanarInnerStar}
	Each instance  $(G,\Gamma_F)$ where $F$ is a chordless outer face and $\Gamma_F$ is a planar-$\beta$-star allows a $\beta$-bend-extension.
\end{reptheorem}

\begin{reptheorem}{th:ExtendPlanarOuterStar}
Each instance  $(G,\Gamma_F)$ where $F$ is a chordless inner face and $\Gamma_F$ is a planar-$\beta$-outer-star allows a $\beta+1$-bend-extension.
\end{reptheorem}

\subsection{Planar star complexity of polygons}
\label{ch:resolve-subsection}

While planar (outer) star complexity nicely bounds the required number of bends per edge in a drawing extension, it does not represent a simple and inherent polygon characteristic. Thus, in the following we first provide an upper bound on the planar (outer) star complexity of a polygon in terms of the size of the polygon (Lemma~\ref{th:ExtendArbitraryCycleToStar}). Then, after  preliminary results, we  provide an upper bound of a planar (outer) star complexity in terms of (outer) star complexity (Theorem~\ref{th:OuterToPlanar}). 

\begin{replemma}{th:ExtendArbitraryCycleToStar}
	A simple polygon with $h$ vertices is a planar $\frac{h-2}{2}$-star and a planar $\frac{h-2}{2}$-outer-star.
\end{replemma}

\begin{proof}
	For the interior, we set a kernel $K$ to be an intersection of the interior of $P$ with an $\eps$-ball around a vertex $u$ of $P$. Let $p$ be a point in $K$. Notice, that by just following the boundary of the polygon it is possible to reach $p$ from any vertex $v\neq u$ with a polyline $c(v)$ with at most $\frac{h-2}{2}$ bends. A set of such curves $\{c(v_i)|v_i \in P\}$, drawn in an appropriate order in order to avoid mutual intersections, represents a planar curve-set of $P$.

	For the exterior, observe that by  following the boundary of the polygon from any vertex $u$ of $P$ it is possible to reach a vertex belonging to the convex hull of $P$ with a polyline $c(u)$ with at most $\frac{h-4}{2}$ bends because the convex hull contains at least three vertices. A set of such curves, drawn in appropriate order in order to avoid mutual crossing, augmented by infinite rays, result in a planar curve-set of $P$ with curve complexity at most $\frac{h-2}{2}$.  \qed
\end{proof}


Observe that, in general, the planar star complexity of a polygon may be  much lower than $\frac{h-2}{2}$. Thus, in the following we aim to bound the  planar star complexity in terms of the star complexity.  
We rely on the following definitions: let $\C$ be a planar curve-set of a planar-$\beta$(-outer)-star.
For a curve $c(v)$ from $\C$ and a point $p$ on  $c(v)$, we denote by $c_v(p)$ the part of the curve split at $p$, not containing $v$ and by  $\#c_v(p)$  the number of bends on $c_v(p)$.
Furthermore, $c(v, p)$ designates the part of the curve $c(v)$ between $v$ and $p$.
An intersection between the curves $c(v)$ and $c(w)$ of $\C$ at a point $p$ is called \emph{avoidable} 
if one of the curves has more bends after the intersection than the other, i.e. if $\#c_v(p)\neq \#c_w(p)$.
The term ``avoidable'' stems from the fact that if $\#c_v(p)>\#c_w(p)$, we can modify $c(v)$ by rerouting it along $c(w)$ starting just before the point $p$ and this way eliminate the intersection without increasing the number of bends per curve.
Concerning said avoidable intersections the following holds:

\begin{replemma}{th:NoAvoidable}
For a given $\beta$(-outer)-star  $P$ there is a curve-set of $P$ with at most $\beta$ bends each without avoidable intersections.
\end{replemma}

In order to resolve all remaining intersections we consider pairs of curves $a$ and $b$ intersecting at a point $p$, such that $p$ is the first intersection for both $a$ and $b$.
In that case we call $p$ \emph{initial} intersection.
However, we first have to show that if there are intersections, then there is always at least one initial intersection.
We formalize this in the following definition and Lemma~\ref{th:NoCyclicOrdering}.
A sequence of vertices $(w_1, \dots, w_m)$ of $P$, with respective curves $(c(w_1), \dots,c(w_m))$ is called \emph{cyclic ordering}, if
for each $1\leq j\leq m$, the first curve that $c(w_j)$ intersects is the curve $c(w_{(j \mod m)+1})$.
We can prove the following:

\begin{replemma}{th:NoCyclicOrdering}
For a given polygon $P$ with a curve-set $\{c(v)\mid v\in V(P)\}$
without avoidable intersections
there is no cyclic ordering.
\end{replemma}


Using Lemma~\ref{th:NoAvoidable} and Lemma~\ref{th:NoCyclicOrdering} we prove a relation between $\beta$-stars and planar-$\beta$-stars.

\begin{theorem}\label{th:OuterToPlanar}
Every $\beta$-star (resp. outer-star) with $n$ vertices is a planar-$(\beta+\delta)$-star (resp. outer-star), where $\delta\leq \log_2(h)$.
\end{theorem}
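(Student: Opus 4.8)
The plan is to start from a curve-set of $P$ with at most $\beta$ bends per curve and \emph{no avoidable intersections}, which exists by Lemma~\ref{th:NoAvoidable}, and then resolve the remaining (necessarily non-avoidable) intersections one at a time while charging the extra bends with a weighted-union (``union by size'') amortization. Throughout I would maintain the invariant that the current curve-set has no avoidable intersections, so that every remaining intersection $p$ between curves $c(v)$ and $c(w)$ satisfies $\#c_v(p)=\#c_w(p)$; by Lemma~\ref{th:NoCyclicOrdering} together with finiteness, as long as any intersection remains there is an \emph{initial} intersection, i.e.\ a pair of curves whose first intersection is their common crossing $p$.

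Group the curves into \emph{bundles}: a bundle is a set of curves that, from some point on, run in parallel toward the center $p$ (for a $\beta$-star) or toward infinity (for a $\beta$-outer-star, viewing the center as a point at infinity as in the paper). Initially each of the $h$ curves (one per vertex of $P$, so $P$ contributes $h$ curves) is its own singleton bundle. To resolve an initial intersection $p$ between a curve of bundle $A$ and a curve of bundle $B$, reroute the curves of the \emph{smaller} bundle so that, just before $p$, each makes one turn and then runs parallel and arbitrarily close to the tail of the larger bundle all the way to the center. Because the intersection is not avoidable, the two tails emanating from $p$ carry the same number of bends, so each rerouted curve loses its old tail and gains an identical tail plus the single turn; hence every curve in the smaller bundle gains exactly one bend and every curve in the larger bundle gains none. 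After rerouting, $B$ shadows $A$, so the $A$--$B$ crossing disappears and no combinatorially new bundle crossing is created (the new crossings of the rerouted curves merely mirror those already present on $A$); taking the number of bundle crossings as a potential, this quantity strictly decreases, so the process terminates with a planar curve-set.

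The logarithm now comes from the rule ``merge the smaller bundle into the larger.'' A curve gains a bend only when its bundle is the one being rerouted, and in that event the merged bundle has size at least $|A|+|B|\ge 2|B|$, i.e.\ the curve's bundle at least doubles. Since a bundle can contain at most $h$ curves and starts at size $1$, each curve can be on the smaller side of a merge at most $\log_2(h)$ times, so it gains at most $\log_2(h)$ bends in total. Consequently the final planar curve-set has at most $\beta+\log_2(h)$ bends per curve, exhibiting $P$ as a planar-$(\beta+\delta)$-star with $\delta\le\log_2(h)$. The outer-star case is identical, with the common center placed at infinity and the parallel tails replaced by parallel infinite rays.

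The step I expect to be the main obstacle is the bookkeeping that keeps the process well defined. I must verify that rerouting the smaller bundle preserves the no-avoidable-intersection invariant (so that Lemma~\ref{th:NoCyclicOrdering} keeps supplying initial intersections), and that pushing a whole bundle arbitrarily close to another introduces no genuinely new crossings and does not spoil planarity where the curves meet at the center. Establishing that a bundle of parallel curves may be treated combinatorially like a single curve — so that both the notion of an initial intersection and the ``no cyclic ordering'' argument transfer to bundles — is the delicate point on which the clean $+1$-per-merge accounting, and hence the $\log_2(h)$ bound, ultimately rests.
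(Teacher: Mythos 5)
Your proposal is correct and takes essentially the same approach as the paper: both start from Lemma~\ref{th:NoAvoidable}, invoke Lemma~\ref{th:NoCyclicOrdering} to keep supplying initial intersections, and resolve each one by rerouting one group of curves along the other at a cost of one bend per rerouted curve (exactly one, by non-avoidability), bounding the extra bends by $\log_2(h)$ via a merge-accounting argument. The only cosmetic difference is the merge rule: you reroute the smaller bundle (union by size, so each bend a curve gains at least doubles its bundle, giving at most $\log_2(h)$ bends per curve directly), whereas the paper reroutes the group with fewer additional bends and proves by induction on group size that the maximum number of additional bends in a group $\Gr$ is at most $\log_2(|\Gr|)$ --- both rules yield the same bound $\delta\leq\log_2(h)$.
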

\begin{proof}
Let $P$  be a $\beta$(-outer)-star with $h$ vertices.
By Lemma~\ref{th:NoAvoidable}, $P$ has a curve-set with at most $\beta$ bends per curve without avoidable intersections.  
Let $p$ be an initial intersection of two curves, which exists by Lemma~\ref{th:NoCyclicOrdering}. We resolve the intersection $p$ by adding a bend to one of the curves and rerouting it along and  sufficiently close to the other  to ensure that they have the same intersections with other curves. We call such curves that follow each other after a resolved intersection a \emph{group}.
We then repeat resolving intersections of groups until there are no more intersections.
As a final part of the proof we show that during this process for each curve at most $\log_2(h)$ bends have been added.

For a curve $c$, let $\bb{c}$ be the number of bends that were added to $c$ during this algorithm.
During the execution of the algorithm we maintain a set of groups $\G$.  Each group $\Gr_i\in\G$ is a set of curves. For each group $\Gr_i$ let $\bb{\Gr_i}$ be the maximum number of additional bends over all curves in $\Gr_i$, i.e.\ $\bb{\Gr_i}=\max_{c\in \Gr_i}(\bb{c})$.
In the beginning each curve is in its own group, that means we start with $\G=\{\{c(v)\}\mid v\in V(P)\}$ and
 for each $\Gr_i\in \G$,  $\bb{\Gr_i}=0$.

The following step is repeated until there are no more intersections.
Let $p$ be an initial intersection of two groups  $\Gr_i$ and  $\Gr_j$.
We  reroute the curves of one of  $\Gr_i$ and  $\Gr_j$. If  we choose to reroute $\Gr_j$, then we add a bend to each curve of $\Gr_j$ and then the curves of $\Gr_j$ follow along the curves of $\Gr_i$, thus increasing $\bb{c}$ by one for each $c\in \Gr_j$. Resolving the intersection $p$ creates a new group $\Gr_k=\Gr_i\cup \Gr_j$.
In order to keep $\bb{\Gr_k}$ bounded we apply the following strategy: if $\bb{\Gr_i}\neq \bb{\Gr_j}$, then we reroute the group with less additional bends and get $\bb{\Gr_k}=\max\{\bb{\Gr_i}, \bb{\Gr_j}\}$. Otherwise, $\bb{\Gr_i}= \bb{\Gr_j}$ and  we arbitrarily choose one of the groups, so $\bb{\Gr_k}=\bb{\Gr_j}+1$.
With each resolved intersection two groups are merged into one, thus the overall number of groups reduces by one. As a result, after at most $h-1$ resolved crossings between groups this iteration stops.

After the above procedure no two curves intersect, thus $P$ is a planar-$\beta\!+\!\delta$-star (resp. outer-star) with $\delta=\max_{\Gr\in\G}(\bb{\Gr})$.
In the following we prove by induction over the group size that for each group $\Gr$ it holds that $\bb{\Gr}\leq\log_2(|\Gr|)$.
For the induction base we observe that if $|\Gr|=1$ we have $\bb{\Gr}=0=\log_2(|\Gr|)$.
As an induction hypothesis, assume that for a $k\geq 1$ and each group $\Gr$ with $|\Gr|\leq k$, it holds that $\bb{\Gr}\leq\log_2(|\Gr|)$ .
Let $\Gr_l$ be a group with $|\Gr_l|=k+1$, which is the result of  merging two groups $\Gr_i$ and $\Gr_j$. Since $|\Gr_i|, |\Gr_j| < |\Gr_l|$,  the induction hypothesis holds for both $\Gr_i$ and $\Gr_j$.
If $\bb{\Gr_i}\neq\bb{\Gr_j}$, we have \(\bb{\Gr_l}=\max\{\bb{\Gr_i}, \bb{\Gr_j}\}
	\leq\log_2(\max\{|\Gr_i|,|\Gr_j|\})
	<\log_2(|\Gr_l|)\). Otherwise, if  $\bb{\Gr_i}=\bb{\Gr_j}$, lets  assume w.l.o.g. $|\Gr_i|\geq|\Gr_j|$, and therefore $|\Gr_l|\geq2|\Gr_j|$.
We have \(\bb{\Gr_l}=\bb{\Gr_j}+1
	\leq\log_2(|\Gr_j|)+1
	\leq\log_2(|\Gr_l|/2)+\log_2(2)
	=\log_2(|\Gr_l|)\).

Since for each $v$ of $P$ the curve $c(v)$ appears in exactly one group, we have that the maximum size of a group is $h$. It follows that $P$ is a planar-$\beta\!+\!\delta$-star (resp. outer-star) with $\delta=\max_{\Gr\in\G}(\bb{\Gr})\leq\log_2(h)$. \qed
\end{proof}

	\section{Drawing extensions of connected subgraphs}\label{ch:GeneralGraphs}
In this section we apply the results from the previous section to provide a tight upper bound on the number of bends in a drawing extension of a connected subgraph.

\begin{reptheorem}{th:ExtendArbitraryCycle}
	Each instance $(G, \Gamma_H)$ where $H$ is an induced connected subgraph of $G$ allows a $\min\{ h/2, \beta + \log_2(h) + 1\}$-bend-extension, where $h$ is the maximum  face size of $H$ and $\beta$ is the maximum (outer) star complexity of a face in $\Gamma_H$. This bound is tight up to an additive constant.
\end{reptheorem}

Above theorem implies an upper bound on the number of bends in case of a non-induced subgraph by simply subdividing the induced edges by dummy vertices and removing them after construction. The tightness of the bound follows from the fact that the lower bound proofs (Theorem~\ref{th:InnerStarAsInnerFaceLowerBound} and Theorem~\ref{th:lowerlog}) can easily be adapted to work for chords.

\begin{corollary}
		Each instance $(G, \Gamma_H)$ where $H$ is a connected subgraph of $G$, allows a $\min\{h +1, 2\beta + 2\log_2(h) + 3\}$-bend-extension, where $h$ is the maximum face size of $H$ and $\beta$ is the maximum star complexity of a face in $\Gamma_H$. This bound is tight up to an additive constant.
\end{corollary}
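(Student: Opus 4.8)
The plan is to reduce the non-induced case to the induced case already settled in Theorem~\ref{th:ExtendArbitraryCycle}, following the remark preceding this corollary, and then to obtain the matching lower bound by adapting the gadgets of Theorem~\ref{th:InnerStarAsInnerFaceLowerBound} and Theorem~\ref{th:lowerlog} to chords. For the upper bound, I first isolate the only edges that obstruct $H$ from being induced: the \emph{chord edges}, i.e.\ the edges $\{u,w\}\in E(G)\setminus E(H)$ with $u,w\in V(H)$. In the plane embedding each such edge lies inside a single face of $H$ with both endpoints on that face's boundary. I subdivide every chord edge once by a dummy vertex, obtaining a plane graph $G'$ in which $H$ is now an \emph{induced} connected subgraph, since no edge of $G'$ joins two vertices of $H$ other than the edges of $H$ themselves. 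Because this operation leaves $\Gamma_H$ untouched and only inserts degree-two vertices into the interiors of faces of $H$, the faces of $H$, their sizes, and their (outer) star complexities are unchanged; in particular $(G',\Gamma_H)$ has the same parameters $h$ and $\beta$ as $(G,\Gamma_H)$.

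Applying Theorem~\ref{th:ExtendArbitraryCycle} to the induced instance $(G',\Gamma_H)$ yields a $k$-bend-extension $\Gamma_{G'}$ with $k=\min\{h/2,\ \beta+\log_2(h)+1\}$. I then suppress the dummy vertices: each dummy $d$ has degree two, and the two polylines drawn for $\{u,d\}$ and $\{d,w\}$ — each carrying at most $k$ bends — are concatenated into a single polyline for the original chord $\{u,w\}$. Suppressing $d$ converts it into at most one additional bend, so the chord is drawn with at most $2k+1$ bends, and planarity is preserved because the route is unchanged. Every remaining edge is either an edge of $H$ (drawn without bends) or an edge incident to an original free vertex (at most $k\le 2k+1$ bends). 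Since $2\min\{a,b\}+1=\min\{2a+1,\,2b+1\}$ gives $2k+1=\min\{h+1,\ 2\beta+2\log_2(h)+3\}$, the stated bound follows exactly, not merely up to constants.

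For tightness I would mirror the lower-bound constructions so that the bend-forcing edge is realized as a chord rather than as an edge to a free vertex. A chord has \emph{both} endpoints fixed on the polygon boundary, so intuitively it must pay the routing cost of the gadget of Theorem~\ref{th:InnerStarAsInnerFaceLowerBound} (respectively Theorem~\ref{th:lowerlog}) at each of its ends; placing two copies of the gadget back-to-back along the chord forces roughly twice as many bends, matching $h$ in the first regime and $2\beta+2\log_2(h)$ in the second, up to an additive constant.

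I expect the lower-bound step to be the main obstacle. The upper-bound reduction is essentially mechanical bookkeeping, but the tightness argument requires verifying that the two mirrored gadgets can coexist inside a single face of one valid instance and, more delicately, that their bend requirements genuinely \emph{add} rather than one half being able to reuse part of the other half's route — this is what makes the factor $2$ truly forced and is the crux that must be checked carefully.
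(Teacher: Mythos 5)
Your proposal matches the paper's own argument: the paper likewise subdivides each chord (edge of $G$ between two vertices of $H$) by a dummy vertex to make $H$ induced, applies Theorem~\ref{th:ExtendArbitraryCycle}, and suppresses the dummies so that each chord carries at most $2k+1=\min\{h+1,\,2\beta+2\log_2(h)+3\}$ bends, with tightness obtained by adapting the gadgets of Theorems~\ref{th:InnerStarAsInnerFaceLowerBound} and~\ref{th:lowerlog} to chords. Your write-up is if anything more explicit than the paper's one-paragraph justification, and your caution about verifying that the mirrored lower-bound gadgets force additive (not shared) bend costs is exactly the point the paper leaves as an assertion.
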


	\section{Extending stars with straight lines}\label{ch:NoBends}
Let $G=(V, E)$ be a plane graph and $F$ a chordless face, fixed on the plane as a star-shaped polygon $\Gamma_F$.
In this section we study the question whether $(G,\Gamma_F)$ admits a straight-line extension.
Note that for $F$ being the outer face of $G$, Hong and Nagamochi~\cite{hong2008convex} showed that $(G,\Gamma_F)$  always admits a straight-line extension.
In the following $F$ is an inner face.

If $F$ is an inner face fixed as a convex polygon $\Gamma_F$, Mchedlidze~\emph{et al.}~\cite{mchedlidze2016extending} showed that it can easily be tested if an instance $(G,\Gamma_F)$  admits a straight-line extension.
In their case a necessary and sufficient condition for an extension to exist is that for each vertex individually there is a valid position outside $\Gamma_F$.
For stars a comparable result is not possible. Even if each vertex could be drawn individually this does not mean that the whole instance admits a straight-line extension. Even more, testing whether pairs of vertices can be drawn together would not be sufficient as the construction  in Figure~\ref{fig:StarNotAsEasyBigger} suggests.

\begin{figure}[b]
	\centering
	\includegraphics[scale=0.8]{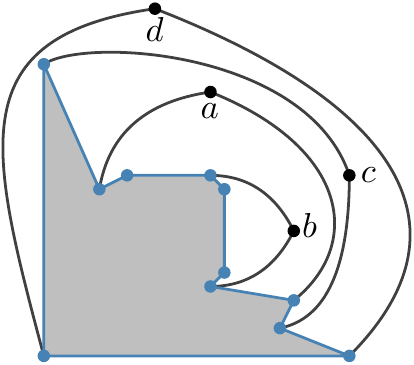}
	\caption{The drawing cannot be extended to a straight-line drawing of the entire graph, even though this is not revealed when testing individual parts.}
	\label{fig:StarNotAsEasyBigger}
\end{figure}

In case of $\Gamma_F$ being a convex inner face (\cite{mchedlidze2016extending}), the feasibility area of a vertex adjacent to the fixed face is just a wedge, formed by the intersection of two half planes induced by two edges of $\Gamma_F$.  In this section we show that the situation for the star shaped inner face is dramatically different, thus there exists an instance for which the feasibility area of a vertex is partially bounded by a curve of exponential complexity.
\begin{reptheorem}{th:ExpoComplBaseCase}
There is an instance $(G,\Gamma_F)$ where $\Gamma_F$ is a star-shaped inner face, such that the feasibility area of some vertex $v\in G$ is partially bounded by a curve whose implicit representation is a polynomial of degree $2^{\Omega(|V|)}$.
\end{reptheorem}

\begin{proofsketch}
A curve is \textit{$i$-exponentially-complex} if it has a parametric representation of the form  $\left\{\left(\frac{r(t)}{u(t)},\frac{s(t)}{u(t)}\right)\mid t\in\I\right\}$ where $r$, $s$ and $u$ are polynomials of degree $2^i$ and $\I$ is an interval.
In the following we describe an instance $(G,\Gamma_F)$, for which a feasibility area of a vertex $v$ is bounded by  an $2^{\Omega(|V|)}$-exponentially-complex curve.
By slightly pertubing the positions of the vertices of $\Gamma_F$ to achieve points in general position we have that the implicit representation of this curve is a polynomial of degree at least $2^{\Omega(|V|)}$.

Let $k\geq 1$ be a fixed integer. Figure~\ref{fig:graphs-complexity-result}a displays the plane graph $G_k=(V,E)$ and the drawing of its inner face as a star-shaped polygon. The vertices $v_i$ and $w_i$ still need to be drawn.
For $0\leq i\leq k$, the feasibility area of $v_i$ is denoted by $A_i$ and the boundary of $A_i$ is referred to as $B_i$. We show that $B_k$ contains a $2^k$-exponentially-complex curve. The proof is by induction on $0\leq i\leq k$.

\begin{figure}[t]
	\centering
	\includegraphics[scale=0.8]{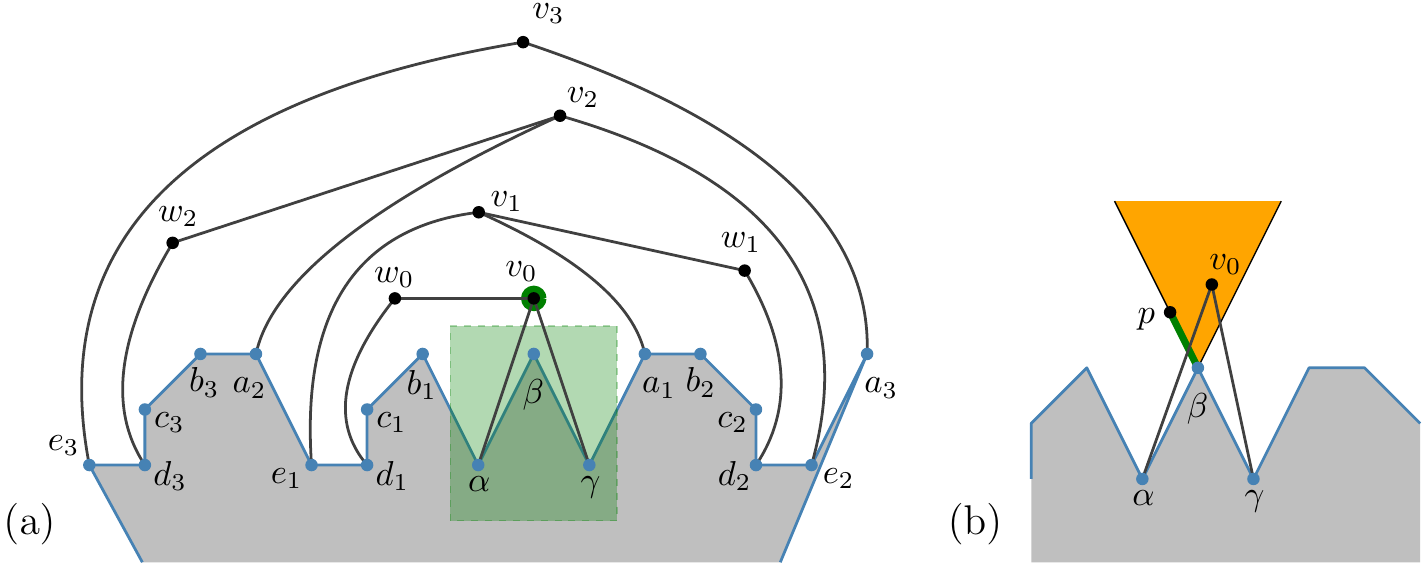}
	\caption{(a) Graph $G_3$, the fixed face is drawn in gray. The vertices in the green area are part of the base case. (b) The base case. The green curve $\C_0$ is on the boundary $B_0$ of the feasibility area $A_0$ of $v_0$.}
	\label{fig:graphs-complexity-result}
\end{figure}

As the base of the induction we consider the boundary $B_0$ of vertex $v_0$ as shown in Figure~\ref{fig:graphs-complexity-result}b.
The feasibility area of $v_0$ is the upper quadrant formed by the lines $l(\alpha, \beta)$ and $l(\gamma, \beta)$.
Let $p=(-1/2, 1)$ be a point on the left boundary of $A_0$. Let $\C_0$ be the segment $s(\beta, p)$ not containing the point $p$.
It holds that $\C_0=\{(\frac{-t}{t+1}, \frac{2t}{t+1})\mid t\in\I=[0, 1)\}$.
The curve $\C_0$ is $0$-exponentially-complex.  An implicit equation of $l(\beta, p)$ is $y+2x=0$.

In the following we assume that the feasibility area $A_{i-1}$ of $v_{i-1}$ is partially bounded by an $i-1$-exponentially-complex curve satisfying additional invariants and prove that the feasibility area $A_i$ of $v_i$ is partially bounded by an $i$-exponentially-complex curve that also satisfies these invariants.
The invariants are given in three groups, the \emph{universal} invariants, holding after each inductive step, the \emph{even} and the \emph{odd} invariants holding after each even and odd step $i\geq 0$, respectively.
Below are the universal and even invariants, with the odd invariants being symmetric.

\pagebreak
\noindent
\textbf{Universal invariants:}
	\begin{enumerate}[label=$\mathcal{UI}$.\arabic*, align=left]
		\item\label{cond:curve} $A_i$ is partially bounded by an $i$-exponentially-complex curve $\C_i=\{v_i(t)=(v_i^x(t), v_i^y(t))\mid t\in \I\}$, where $\I=[0, \I_\max)$ and $\I_\max>0$,
		\item\label{cond:increasing} $v_i^y(t)$ is strictly increasing for $t\in [0, \I_\max)$,
	\end{enumerate}
\noindent
\textbf{Even invariants:}
	\begin{enumerate}[label=$\mathcal{EI}$.\arabic*, align=left]
		\item\label{cond:even-zero-position} $b_{i+1}^x < v_i^x(0) < a_{i+1}^x$ and $v_i^y(0) = 0$,
		\item\label{cond:even-Ai-right} $A_i$ is on the right of $\C_i$,
		\item\label{cond:even-nothing-left} Ray $q(a_{i+1}, v_i(0))$ intersects no point of $A_{i}$ to the left of $v_i(0)$. 
	\end{enumerate}
We observe that universal and even invariants hold for the base case $i=0$.

Let $\C_{i-1}=\left\{v_{i-1}(t) = \left(\frac{r(t)}{u(t)},\frac{s(t)}{u(t)}\right)\mid t\in\I\right\}$.
The position $w_{i-1}(t)$ of vertex $w_{i-1}$ is described as the intersection of the rays $q(v_{i-1}(t), b_i)$ and $q(d_i, c_i)$.
The position $v_i(t)$ of $v_i$ is described as $q(a_i, v_{i-1}(t)) \cap q(e_i, w_{i-1}(t))$. 

Using this we calculate the curve $\C_i$, i.e.\ we calculate the position of $v_i$ as a function of $t$.
This can be done by calculating the equation of the line $l(v_{i-1}(t)),b_i)$, the position of the vertex $w_{i-1}(t)$ and then the equations if the lines $l(a_i,v_{i-1}(t))$ and $l(e_i,w_{i-1}(t))$. The intersection of the latter lines is $v_i(t)$ and we obtain $\C_i=\{(\frac{r_i(t)}{u_i(t)},\frac{s_i(t)}{u_i(t)})\mid t\in \I\}$, where each of $r_i(t)$, $s_i(t)$, $u_i(t)$ is quadratic in $r(t)$, $s(t)$ and $u(t)$.
By induction hypothesis, $C_{i-1}$ is an $i-1$-exponentially complex curve, i.e.\ $u(t)$, $s(t)$, $r(t)$ contain terms $t^{2^{i-1}}$\!. So the curve $\C_i$ is $i$-exponentially complex, provided that the coefficients of highest degree do no cancel themselves out, which can be avoided by slightly perturbing the position of vertex $e_i$. This proves Invariant~\ref{cond:curve}. A proof that the remaining invariants hold after the induction step concludes the proof of the theorem.
\qed
\end{proofsketch}

\section{Conclusion}
\label{ch:conclusion}

We have shown that  a drawing $\Gamma_H$ of an induced connected subgraph $H$ can be extended with at most $\min\{h/2, \beta + \log_2(h) + 1\}$ bends per edge if the star complexity of $\Gamma_H$ is $\beta$ and $h$ is the size of the largest face of $H$ and that this bound is tight up to a small additive constant. In the event of a disconnected subgraph $H$ the known upper bound is $72|V(H)|$. It is tempting to investigate whether the constant $72$ can be lowered and to provide a matching lower bound. 

We have proven that there is an instance $(G,\Gamma_F)$ where $\Gamma_F$ is a star-shaped inner face, such that the feasibility area of some vertex $v\in G$ is partially bounded by an exponential degree curve. 
This is an indication that for a given instance $(G, \Gamma_F)$ it is difficult to test whether $(G,\Gamma)$ admits a straight-line extension. It would be interesting to establish the computational complexity of this problem. We were not able to show the NP-hardness of the problem. Due to its similarity with visibility and stretchability problems we conjecture that the problem is as hard as the existential theory of reals. 

~\\
\noindent
{\bf Acknowledgment}  The authors  thank  Martin N\"ollenburg and Ignaz Rutter for the discussions of this problem back in 2012. J.\,Urhausen has been supported by the Netherlands Organisation for Scientific Research under project 612.001.651.


	\bibliographystyle{abbrv}
	\bibliography{references}

	\appendix
	\label{ch:appendix}

	
	\newpage
	
	\section*{Appendix}
\subsection*{$\beta$-stars}
Figure~\ref{fig:DifferencePlanar} shows an example proving that not each $\beta$-outer-star is also a planar-$\beta$-outer-star. The same holds for (planar-)$\beta$-stars.

\begin{figure}[h]
	\centering
	\includegraphics[scale=0.8]{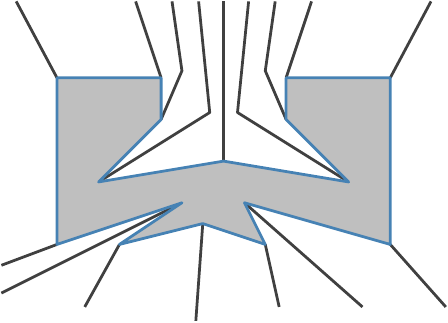}
	\caption{This polygon is a $0$-outer-star, but only a planar-$2$-outer-star.}
	\label{fig:DifferencePlanar}
\end{figure}

\subsection*{Proofs of Subsection~\ref{ch:beta-subsection}}

\repeattheorem{th:InnerStarAsInnerFaceLowerBound}
\begin{proof}

Let $h\geq1$ be a fixed odd integer.
We construct a graph $G_h=(V, E)$ containing a face $F$ with $|F|=h$. We draw $F$ such that the minimal number of bends needed to extend the drawing equals $\ell=(h-3)/2$.
The following construction is visualized in Figure~\ref{fig:k-visible}a.

We set $V = \{a_j, b_j\mid 0\leq j< \ell\}\cup \{a_\ell, u, w, v\}$. Notice that $|V|=h+1$.
In the following we assume that $\ell$ is even. For an odd $\ell$, we use $V\setminus \{a_\ell, b_{\ell-1}\}$ as a vertex set and the analysis determining the number of bends needed is the same.
The face that is fixed is bounded by the following vertices:
\[F= \{ u, a_\ell, a_{\ell-1}, \dots, a_1, a_0, b_0, b_1, \dots, b_{\ell-1}, w \}\]
In addition to the edges induced by $F$, the graph contains the two edges $\{a_0, v\}$ and $\{b_0, v\}$.
The face $F$ is drawn as follows:
\begin{itemize}
	\item $a_j=(3\lfloor j/2 \rfloor, 3\lfloor (j+1)/2 \rfloor), \forall\,1\leq j\leq \ell$,
	\item $b_j=(3\lfloor j/2 \rfloor+1, 3\lfloor (j+1)/2 \rfloor)-1, \forall\,1\leq j< \ell$,
	\item $u=(-3\lfloor (\ell+1)/2 \rfloor-4, 3\lfloor (\ell+1)/2 \rfloor+2)$,
	\item $w=(3\lfloor \ell/2 \rfloor, - 3\lfloor \ell/2 \rfloor -2)$.
\end{itemize}
The resulting polygon $P$ is a $1$-star.
We prove that in order to extend the drawing of $F$ as $P$ to a full drawing of $G_h$, at least one of the edges adjacent to $v$ needs at least $\ell$ bends.
\begin{definition}
We say that a point $q$ outside $P$ is \emph{$k$-visible} from a vertex $z\in F$, if $k$ is the smallest integer such that there exists a k-bend polyline $c(z, q)$ outside $P$ connecting $z$ and $q$ such that $c(z,q)\cap P = \emptyset$.
\end{definition}

\begin{figure}[ht]
	\centering
	\includegraphics[scale=0.8]{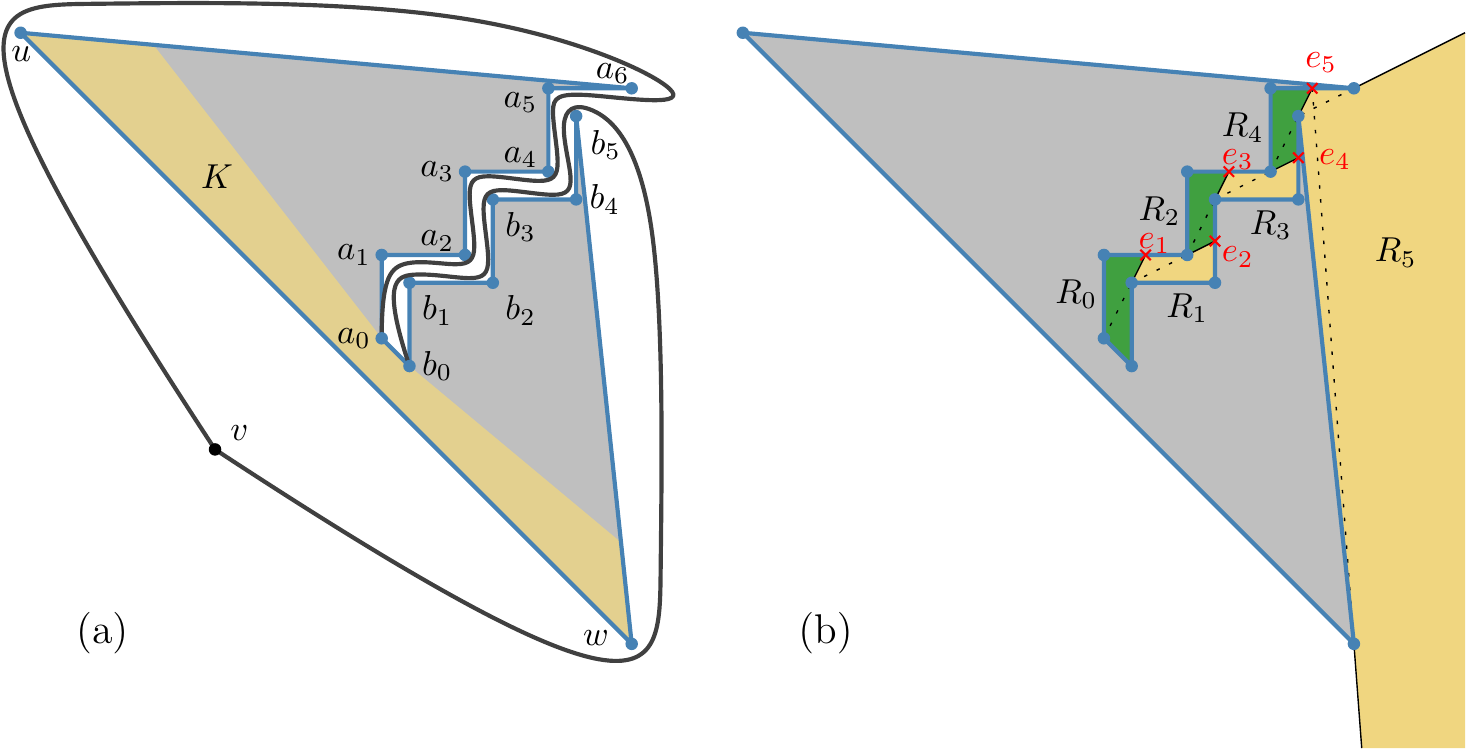}
	\caption{(a) The embedding of $G_2$ along with the fixed drawing of the face $F$ as a $1$-star. The kernel $K$ is beige.
	(b) The regions $R_k$ for $G_2$. The regions with even $k$ are colored green and the odd ones beige.}
	\label{fig:k-visible}
\end{figure}

Let $R_k$ (resp. $R'_k$) be the set of all points that are $k$-visible from $a_0$ (resp. $b_0$).
In order to describe the regions $R_k$ we define the point set $\{e_j\mid 1\leq j < \ell\}$
\begin{align*}
	&\{e_i\}=s(a_i,a_{i+1})\cap q(a_{i-1},b_i)\mbox{, for }i\mbox{ odd,}\\
	&\{e_i\}=s(b_i,b_{i+1})\cap q(b_{i-1},a_i)\mbox{, for }i\mbox{ even.}
\end{align*}

It then follows that for $k\leq \ell-2$ the regions $R_k$ are polygons, see Figure~\ref{fig:k-visible}b.
\begin{align*}
	&R_0=(a_0, a_1, e_1, b_1, b_0)\mbox{,}\\
	&R_i=(e_i, a_{i+1}, e_{i+1},b_{i+1}, b_i)\mbox{, for }i\mbox{ odd,}\\
	&R_i=(a_i, a_{i+1}, e_{i+1}, b_{i+1}, e_i)\mbox{, for }i\mbox{ even.}
\end{align*}
The region $R_{\ell-1}$ has infinite area and is to the right of its boundary consisting of the ray $q(e_{\ell-1}, w)$, the segments induces by the sequence $(w, b_{\ell-1}, e_{\ell-1}, a_\ell)$ and the ray $q(b_{\ell-1}, a_\ell)$.
Additionally we have $R'_0= (a_0, a_1, e'_1, b_0)\subseteq R_0$ with $e'_1=r(b_0, b_1)\cap s(a_1, a_2)$,
$R'_1= R_1 \cup (R_0\setminus R'_0)$ and $R'_i=R_i$ for $i\geq 2$.

From the form of $R_{\ell-1}$ it follows that one cannot draw $v$ such that the two edges $\{v, a_0\}$ and $\{v, b_0\}$ have at most $\ell-1$ bends each. It is however possible with $\ell$ bends each.
\qed
\end{proof}


\repeattheorem{th:lowerlog}
\begin{proof}

For $i\geq 0$ we inductively construct instances $(G_i, \Gamma_{P_i})$.
The instance $(G_0, \Gamma_{P_0} )$  is comprised by the graph $G_0$ that consists of a cycle represented by polygon $P_0$ (see Figure~\ref{fig:LowerBoundExample}a)  and the edges connecting vertices $a$ and $b$ to a vertex~$v$. Vertices $a$ and $b$ are called the \emph{peak} vertices of $P_0$. The outer star complexity of $P_0$ is $0$.
For a peak $w$, let $x$ be the vertex of $P_0$ adjacent to $w$, that is the closest to $w$.
The points outside the polygon whose distance to $w$ is less than half the distance between $x$ and $w$ form the \emph{corridor} of the peak.

\begin{figure}[htbp]
	\centering
	\includegraphics[scale=0.8]{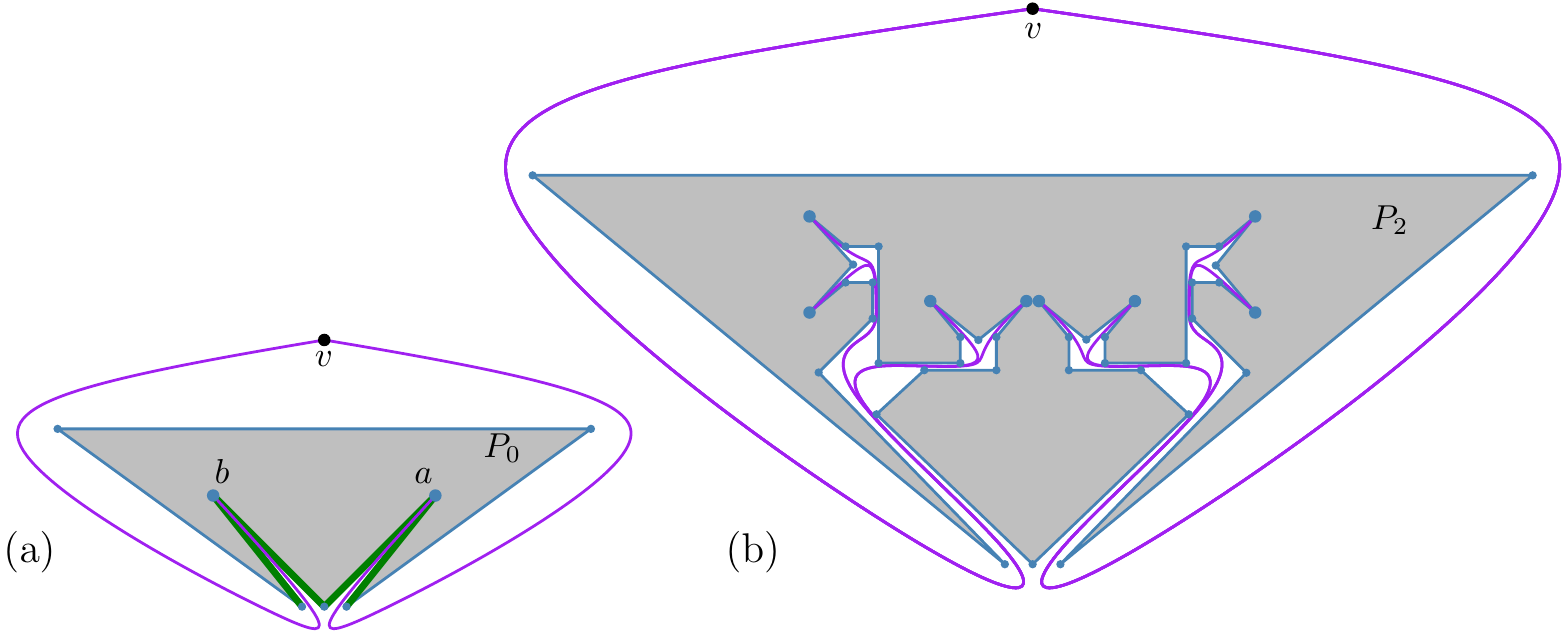}
	\caption{(a)  Instance $(G_0, \Gamma_0)$. (b) Instance $(G_2, \Gamma_2)$. } 
	\label{fig:LowerBoundExample}
\end{figure}

To build the instance $(G_{i+1}, \Gamma_{P_{i+1}})$  from $(G_{i}, \Gamma_{P_{i}})$, we replace each peak $w$ and the edges incident to it by a scaled down \emph{unit} $U$ shown in Figure~\ref{fig:LowerBoundUnit}a. The unit $U$ has two new \emph{peaks} with respective corridors.
We  connect the peak vertices of $U$ to $v$ using the same embedding as the now deleted edge $\{w, v\}$. See Figure~\ref{fig:LowerBoundExample}b for $(G_{2}, \Gamma_{P_{2}})$. 
This step  adds $6\cdot 2^{i+1}$ vertices to the polygon. Thus,  $|V(G_{i+1})|=(6 \cdot 2^{i+1}-5) + 6\cdot 2^{i+1} =6 \cdot 2^{i+2}-5$.
The outer star complexity of $P_{i+1}$ is increased by one compared to $P_i$ and thus it is $i+1$.  

A drawing extension of $(G_i, \Gamma_i)$ with the smallest number  of bends per edge has vertex $v$ positioned  above and in the middle of $P_i$, due to the symmetry of the problem. It is also intuitive that the further away $v$ is, the less bends we need to reach the vertices of $P_i$. So we  assume that $v$ is positioned infinitely far above $v$. In the rest of the proof we prove that the planar outer star complexity of $(G_i,\Gamma_{P_i})$ is $2i+1$ and therefore at least one edge incident to $v$ needs $2i+2$ bends.  
For a fixed $i$ we have that the pre-drawn face has $h= 6 \cdot 2^{i+1}-5$ vertices. This implies $i=\log_2(\frac{h+5}{6})-1$. The polygon $P_i$ has outer star complexity $\beta = i$. So we will get that in any extension of $G_i$  there exists one edge with $2i+2 = \beta+\log_2(\frac{h+5}{6})+1$ bends.

\begin{figure}[ht]
	\centering
	\includegraphics[scale=0.8]{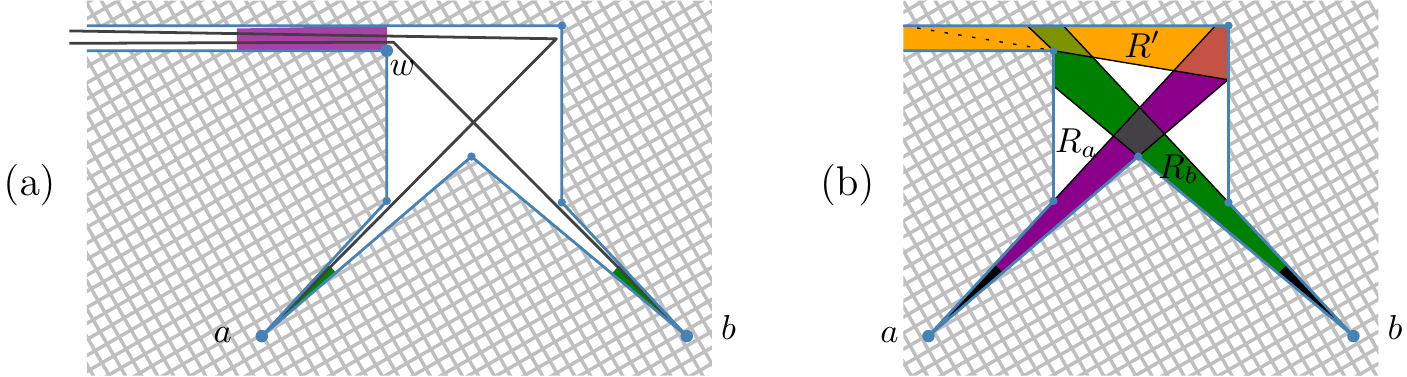}
	\caption{(a)  Illustration of a unit. Vertices $a$ and $b$ are the peeks of the unit.
	The corridors are highlighted.
	(b) Illustration of the regions $R'$, $R_a$ and $R_b$.}
	\label{fig:LowerBoundUnit}
\end{figure}

Our induction hypothesis is that the planar outer star complexity of $(G_i,\Gamma_{P_i})$ is  $2i+1$. Let $\C$ be a planar curve set of $(G_i,\Gamma_{P_i})$. The induction hypothesis implies that for at least one  peak  vertex $w$ its curve $c(w) \in \C$ has $2i+1$ bends.  As an \emph{invariant} we assume that the bend $c$ on $c(w)$ next to $w$ is not in $w$'s corridor. The induction hypothesis and the invariant hold for $i=0$.   Let $U$ denote the unit that has replaced $w$ and let $a, b\in P_{i+1}$ be the peak vertices of $U$.  Let $R'$ be the \emph{visibility region} of $c$ inside $U$, i.e. the set of points inside $U$ that $c$ can see. Assume also that the position of $c$ is such that $R'$ is maximal.

Let $R_a$ and $R_b$ be the visibility regions of $a$ and $b$ inside $P_{i+1}$, respectively, as shown in Figure~\ref{fig:LowerBoundUnit}b.
Observe that due to the shape of the unit and the fact that $c$ is outside the corridor of $w$, $R_a$ and $R_b$ intersect and their intersection lies outside of $R'$.
Thus, any two segments $s(a, a')$ and $s(b, b')$ inside $P_{i+1}$ with $a', b' \in R'$ intersect.
Let us denote by  $p_{a,c}$ (resp. $p_{b,c}$) a polyline inside $P_{i+1}$  with the smallest possible number of bends, connecting $a$ to $c$ ($b$ to $c$, respectively). Any polylines $p_{a,c}$, $p_{b,c}$ with one bend each will intersect. Thus, if  the polylines  $p_{a,c}$, $p_{b,c}$ are non-intersecting, at least one of them has $2$ bends. Since $p_{a,c}$ (resp. $p_{b,c}$) has minimum possible number of bends,  its bend closest to the peak $a$ (resp. $b$), is outside of the corridor of $a$ (resp. $b$).  This implies that the curve complexity of $P_{i+1}$ is $2(i+1)+1$ and that the invariant holds for $(G_{i+1},G_{P_{i+1}})$.
\qed
\end{proof}


\repeattheorem{th:ExtendPlanarInnerStar}
\begin{proof}
We reduce this problem to the problem of extending a partial drawing of the outer face pre-drawn as a star-shaped polygon. The proof is depicted in Figure~\ref{fig:beta-star-to-star}a.
If $\beta=0$, then we just use the algorithm by~\cite{hong2008convex}. Otherwise, let $p$ be a point of the kernel $K$ and let $\C$ be a curve-set of $P$ with center $p$.

Let $v$ be a vertex of $F$, $c(v) \in \C$ be its curve, let $b(v)$ be the last bend on $c(v)$ before $p$ and let $c'(v)=c(v)\setminus s(p,b(v)) \cup \{b(v)\}$ be the curve resulting from removing from $c(v)$ its segment incident to $p$. 
Let $P'$ be the visibility polygon of $p$ inside $P$ restricted by the curves $\cup_{v\in F} c'(v)$.	That is $P'$ is the set of all points $p'$ such that $s(p,p')$ intersects neither $P$ nor a curve $c_i'(v)$.
Clearly we have that the points $\{b(v)| v\in F\}$ are on the boundary of $P'$.

Remember that $N_F(v)=\{u_1,\dots,u_{|N_F(v)|}\}$ is the set of the neighbors of $v$ that are inside $F$.
On each edge $\{v,u_i\}$, $i=1,\dots,|N_F(v)|$  we place a dummy vertex $d_i$. 
We draw the edges $\{v,d_i\}$, $i=1,\dots,|N_F(v)|$ along the curve  $c(v)$ by placing $d_i$ on the boundary of $P'$ very close to $b(v)$.
We connect the dummy vertices with edges as induced by their position on the polygon $P'$, such that they form a cycle. Then each of those edges is drawn following the boundary of $P'$, by placing additional dummy vertices in the event that a bend is needed.
Let $G^*$ be this new graph.
	
Let $G'$ be the subgraph of  $G^*$ induced by  $ V(G^*) \setminus F$.  The vertices on the boundary of $P'$ form the outer face $F'$ of $G'$.
Now we have that $P'$ is a star-shaped polygon and $G'$ has to be drawn inside $P'$. Using the algorithm by~\cite{hong2008convex} we know that $G'$ admits a straight-line drawing inside $P'$.
By merging back the edges of $G$ along the dummy nodes and by removing the edges induced by the dummy vertices, we obtain a drawing extension of $(G,\Gamma_F)$ with at most $\beta$ bends per edge. 
\qed
\end{proof}

\begin{figure}[htbp]
	\centering
	\includegraphics[scale=0.8]{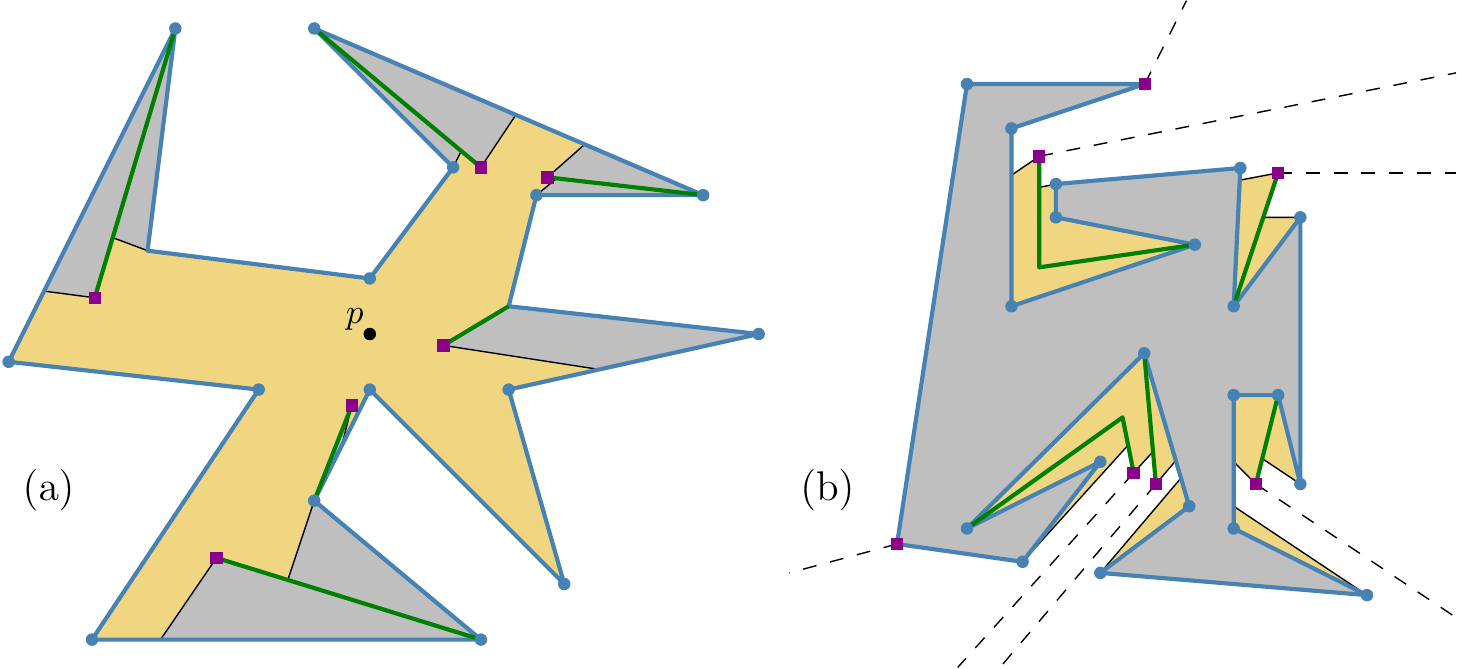}
	\caption{(a) The polygon $P'$ (yellow) is the visibility polygon of $p$. (b) The polygon $P'$ is the union of the yellow and gray regions. The infinite rays of the curves $c_i(v)$ are dashed.}
	\label{fig:beta-star-to-star}
\end{figure}


\repeattheorem{th:ExtendPlanarOuterStar}
\begin{proof}
The proof is along the same line as the proof of Theorem~\ref{th:ExtendPlanarInnerStar} above. It is depicted in Figure~\ref{fig:beta-star-to-star}b.
The only two differences are the following:
First, the center of the curve-set is a point at infinity, so the curves $c(v)$ end with a ray instead of a segment. 

Second, for the polygon $P'$ we cannot simply say that these are the points visible from infinity.
This way $P'$ would only be an outer-star, but not necessarily a planar outer-star.
We need to make sure that there exists rays from the vertices of $P'$ that do not intersect each other.
So let $R$ be the set of the rays of the curves $c_i(v)$. Then let $R'$ be a superset of $R$, such that two rays in $R'$ neither intersect each other nor $P$ nor a curve $c_i(v)$. Additionally, let $R'$ be inclusion maximal with this property.
Then the polygon $P'$ is defined as $\mathbb{R}^2$ minus the points that are on a ray of $R'$.
It follows that $P'$ is a planar outer-star, which by Corollary~\ref{th:corollary-outer-one-bend} implies that we can draw $G'$ outside $P'$ with at most one bend per edge.
\qed
\end{proof}


\subsubsection*{Proofs of Section~\ref{ch:resolve-subsection}}


\repeatlemma{th:NoAvoidable}
\begin{proof}
Let $\C$ be a curve-set of $P$. We assume that the vertices of $P$ are in general position, in the sense that no three curves of $\C$ intersect at the same point and no curve has a bend that lies exactly on another curve.  
If $c(v)$ and $c(w)$ of $\C$ intersect at a point $p$ and we have $\#c_v(p)> \#c_w(p)$, we say that $c(v)$ is \emph{responsible} for this avoidable intersection.
Let $v_1, \dots, v_h$ be an arbitrary ordering of the vertices of $P$. For simplicity of notation, we refer to the curve $c(v_i)$ by $c_i$ and to the subcurve $c_{v_i}(p)$ by $c_i(p)$.
We prove the following statement by induction on $i$: there is no curve $c_j$ with $1\leq j < i$ which is responsible for an avoidable intersection. For $i=1$ this is trivially true.

Now assume that the hypothesis is true for some value of $i<h$.
We eliminate all avoidable intersections that $c_i$ is responsible for. We start with the first avoidable intersection of $c_i$ that it is responsible for, i.e. the intersection of $c_i$ and $c_k$ at $p$ such that $\#c_i(p)> \#c_k(p)$ and such that $c_i$ is not responsible for an avoidable intersection between $v_i$ and $p$.
We reroute $c_i$ by deleting the part $c_i(p)$ and routing $c_i$ along $c_k$.
First, this does not increase the number of bends of $c_i$.
Second, new avoidable intersections on $c_i$ can be created, but these only occur after the point $p$.
The step of resolving the first avoidable intersection that $c_i$ is responsible for is repeated.

Next, we prove that this procedure eliminates at most $\beta$ intersections.
Assume for the sake of contradiction that $c_i$ is responsible for an avoidable intersection with a curve $c_k$ at a point $p$ after having already resolved $\beta$ avoidable intersections that $c_i$ was responsible for. We have that  $\#c(v_i, p)=\beta$ because $c_i$ got a bend at each resolved intersection. We also have that $\#c_i(p)> \#c_k(p)$ because  $c_i$ is responsible for $p$.  We additionally know that $\#c_i\leq \beta$.  It follows that $0\geq \#c_i - \#c(v_i, p)=\#c_i(p)> \#c_k(p)\geq 0$, which is a contradiction.
So each $c_i$ contains at most $\beta$ avoidable intersections. 

We now show that we do not create an avoidable intersection that a $c_j$ with $j<i$ is responsible for.
Assume that  $c_i$ is rerouted along $c_k$ so that $c_i$ now intersects $c_j$ at $p$. It follows that $c_k$ also intersects $c_j$ at a point $p'$ close to $p$. Either the intersection at $p'$ is not avoidable or if it is then $c_j$ is not responsible for it due to the induction hypothesis.
Thus the intersection between $c_i$ and $c_j$ is also not avoidable or if it is, then $c_j$ is not responsible for it.
Combined with the induction hypothesis, it follows that after eliminating all avoidable intersections that $c_i$ is responsible for, we have that for each $1\leq j < i+1$ the curve $c_j$ is not responsible for any avoidable intersection.
\qed
\end{proof}


\repeatlemma{th:NoCyclicOrdering}

\begin{proof}
We assume for the sake of contradiction that a cyclic ordering $(w_1, \dots, w_m)$ exists. We refer to the curve $c(w_j)$ by $c_j$.
Remember that a point $p\in c_j$ splits the curve into the part $c(w_j, p)$ between $w_j$ and $p$ and the remaining part $c_j(p)$.

First, we show that all the intersections of the cyclic ordering happen on a single  segment for each curve. 
For $1\leq i\leq m$, let $j=(i\mod m)+1$ be the next number in cyclic ordering. Let $p_i$ denote the first point of intersection between the curves $c_i$ and $c_j$. 
Since  $p_i$ is not avoidable, both $c_i$ and $c_j$ have the same number of bends after $p_i$, let us denote this number by $z_i$. 
By the definition of cyclic ordering, the intersection $p_j$ appears before the intersection $p_i$ on curve $c_j$, i.e.  $z_j \geq z_i$. Thus $z_{(i\mod m)+1} \geq z_i$ for all $1\leq i \leq m$ and therefore all $z_i$ are equal for $1\leq i\leq m$. Thus, the intersections $p_i$ and $p_j$ lie on the same segment of $c_j$ and we will denote it by $s_j=(e_j, f_j)$.

The union of the curves $c(w_1,p_1)$ and $c(w_2,p_1)$ separates the inside, resp. outside of the polygon into two regions. 
Let $R$ be the region of those two that does not contain the center $k$ of the curve-set.

We next prove that $f_i$ is outside $R$ for each $1\leq i \leq m$. For the sake of contradiction, we assume that $f_i$ is inside $R$; refer to Figure~\ref{fig:NoCyclicOrdering}a for the case  $i=3$. Then the curve $c_i$ has to exit $R$ again because $c_i$ has to reach $k$. 
But $c_i$ can only exit $R$ through the segment $s(p_2,p_1)\subseteq s(e_2, p_1)$ on $c_2$ because $p_1$ and $p_2$ are the first intersections of the curves $c_1$ and $c_2$, respectively.
However, for any intersection $z$ of $c_i$ and $c_2$ on the segment $s(e_2,p_1)$ we have $\#c_2(z)>\#c_i(z)$ because for $c_i$, $z$ follows the bend at $f_i$.
This means that the intersection at $z$ is avoidable.

\begin{figure}[htbp]
\centering
\includegraphics[scale=0.8]{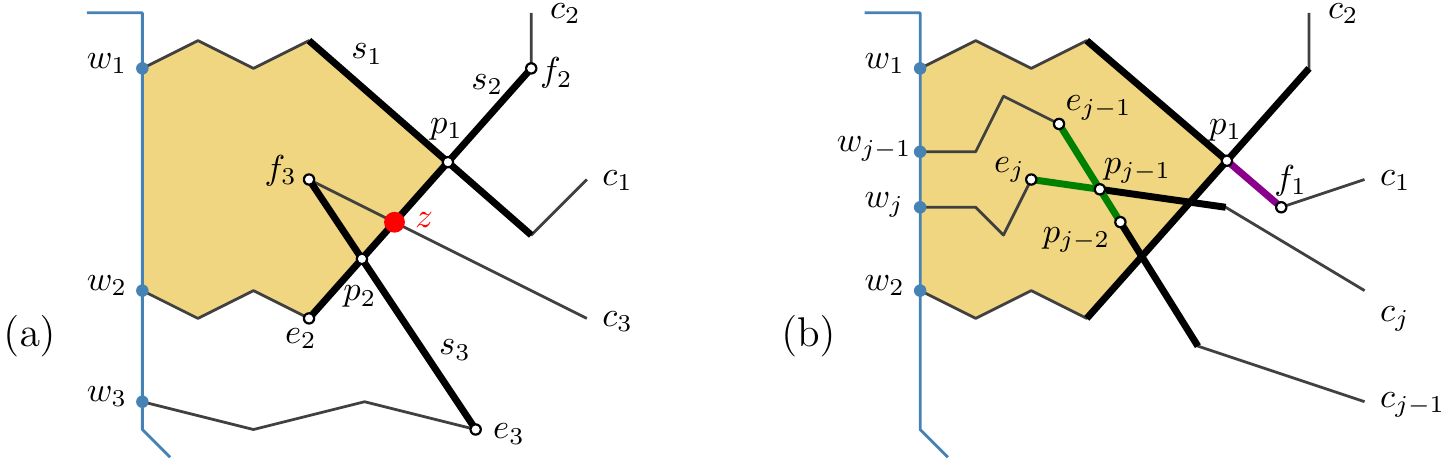}
\caption{The figures accompanying the proof of Lemma \ref{th:NoCyclicOrdering}. The segments $s_j=(e_j, f_j)$ creating the first intersections between the curves in the cyclic ordering are bold. The region R is beige.}
\label{fig:NoCyclicOrdering}
\end{figure}

Now we prove by induction that  segment $s(e_j, p_{j-1})$, $3\leq j\leq m$,  lies inside $R$. 
For $j=3$ we have that the segment $s(e_3, f_3)$ has to intersect the segment $s(e_2,p_1)$ in $p_2$ and $s(e_2,p_1)$  is part of the boundary of $R$. Due to the fact that $f_3$ is outside $R$, $s(e_3, p_2)$ is inside $R$.

We assume as an induction hypothesis that $s(e_j, p_{j-1})$ lies inside $R$, for $j<m$. 
At least some part of the segment $s_{j+1}$ has to be inside $R$ in order to intersect $s(e_j, p_{j-1})$, as can be seen in Figure~\ref{fig:NoCyclicOrdering}b.
As before, the curve $c_{j+1}$ can only intersect the boundary of $R$ at the segment $s(p_2, p_1)$. Additionally, the curve $c_{j+1}$ can only intersect the segment $s(p_2, p_1)$ with the segment $s_{j+1}$, because otherwise an avoidable intersection would exists.
It follows that $f_{j+1}$ has to be outside $R$ and that $e_{j+1}$ has to lie inside $R$. Due to $s(e_j, p_{j-1})$ also being inside $R$, $s(e_{j+1}, p_j)$ is completely inside $R$, too. This finishes the inductive proof.

We have shown that $s(e_m, p_{m-1})$ lies inside $R$. The first intersection of $c_m$ has to belong to the segment $s(e_m, p_{m-1})$ and is with the curve $c_1$ on its segment $s(p_1, f_1)$. However, the segment $s(p_1, f_1)$ lies outside $R$ and therefore cannot intersect $s(e_m, p_{m-1})$. This is a contradiction, which implies that a  cyclic ordering cannot exists. \qed
\end{proof}


\subsection*{Proofs of Section~\ref{ch:GeneralGraphs}}

\repeattheorem{th:ExtendArbitraryCycle}
\begin{proof}
We extend the drawing $\Gamma_H$ to a drawing of the entire graph $G$ face by face.
If $H$ is biconnected, thus each face $F$ is drawn as a simple connected polygon $\Gamma_F$, the upper bound follows trivially by 
observing that $\Gamma_F$ is  a  planar-$\frac{h-2}{2}$-(outer) star (by Lemma~\ref{th:ExtendArbitraryCycleToStar}) and
a planar-$\beta + \log_2(h)$-(outer)-star (by Theorem~\ref{th:OuterToPlanar}), where $h$ is the number of vertices in $F$. 
If $F$ is the outer face of $G$, we can then extend the drawing $\Gamma_F$ using at most $\min\{ \frac{h-2}{2}, \beta + \log_2(h)\}$ bends per edge (Theorem~\ref{th:ExtendPlanarInnerStar}), otherwise we get $\min\{ \frac{h}{2}, \beta + \log_2(h)+1\}$ bends per edge (Theorem~\ref{th:ExtendPlanarOuterStar}).

If $H$ is not biconnected, then $\Gamma_G$ contains faces with multiple occurrences of a vertex. We observe that the techniques applied in the proofs of the aforementioned theorems directly generalize to this case. 

By Theorem~\ref{th:InnerStarAsInnerFaceLowerBound} we know that there are instances where each extension contains an edge with at least $\lfloor\frac{h-3}{2}\rfloor$ instances.
We have $\frac{h}{2} - \lfloor\frac{h-3}{2}\rfloor \leq 2$, for each $h$.
By Theorem~\ref{th:lowerlog} we know that there are instances where each extension contains an edge with at least $\beta+ \log_2(\frac{h+5}{6})+1$ bends.
We have $\beta + \log_2(h)+1 - (\beta+ \log_2(\frac{h+5}{6})+1)  < \log_2(6) \approx 2.58$, for each $h$. So the proven upper bound is tight up to an additive constant of $3$.
\qed
\end{proof}


\subsection*{Proofs of Section~\ref{ch:NoBends}}

\repeattheorem{th:ExpoComplBaseCase}
\begin{proof}
A curve is \textit{$i$-exponentially-complex} if it has a parametric representation of the form  $\left\{\left(\frac{r(t)}{u(t)},\frac{s(t)}{u(t)}\right)\mid t\in\I\right\}$ where $r$, $s$ and $u$ are polynomials of degree $2^i$ and $\I$ is an interval.

In the following we describe an instance $(G,\Gamma_F)$ for which a feasibility area of a vertex $v$ is bounded by  an $2^{\Omega(|V|)}$-exponentially-complex curve. By slightly pertubing the positions of the vertices of $\Gamma_F$ to achieve points in general position, we have that the implicit representation of this curve is a polynomial of degree at least $2^{\Omega(|V|)}$.

\begin{figure}[h]
	\centering
	\includegraphics[scale=0.8]{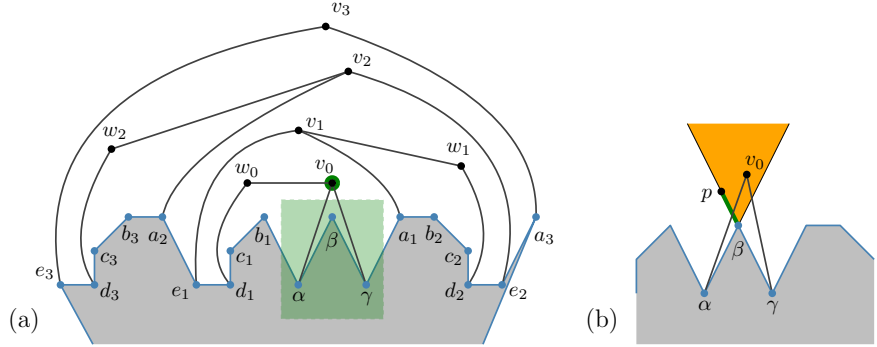}
	\caption{(a) Graph $G_3$, the fixed face is drawn in gray. The vertices in the green area are part of the base case. (b) The base case. The green curve $\C_0$ is on the boundary $B_0$ of the feasibility area $A_0$ of $v_0$.}
	\label{fig:graphs-complexity-result-appendix}
\end{figure}

Let $k\geq 1$ be a fixed integer. In the following we define an instance of size $O(k)$ with the above-mentioned property.
Let $G_k=(V,E)$ be the plane graph shown in Figure~\ref{fig:graphs-complexity-result-appendix}a. The vertex set is $V=\{a_i, b_i, c_i, d_i, e_i\mid 1\leq i\leq k\}\cup\{v_i\mid 0\leq i\leq k\}\cup\{w_i\mid 0\leq i< k\}\cup\{\alpha, \beta, \gamma, h\}$.
The edges are defined as follows: the face $F$ is
\begin{alignat*}{3}
k&=2j: &&a_{2j}, e_{2j-1}, d_{2j-1}, c_{2j-1}, b_{2j-1}, a_{2j-2}, \dots, c_1, b_1, \alpha, \beta, \gamma,\\[-1\jot]
&&&a_1, b_2, c_2, d_2, e_2, a_3, \dots, d_{2j}, e_{2j}, h\\
k&=2j+1: \\[-1\jot]
&e_{2j+1}, \dots, b_{2j+1},\ &&a_{2j},\ e_{2j-1}, d_{2j-1}, c_{2j-1}, b_{2j-1}, a_{2j-2}, \dots, c_1, b_1, \alpha, \beta, \gamma,\\[-1\jot]
&&&a_1, b_2, c_2, d_2, e_2, a_3, \dots, d_{2j}, e_{2j}, a_{2j+1}, h
\end{alignat*}
In addition to the edges around $F$, the edge set contains the following edges for $1\leq i\leq k$:
$\{v_0, \alpha\}, \{v_0, \gamma\},
\{v_i,a_i\}, \{v_i,e_i\}, \{w_{i-1}, d_i\}, \{w_{i-1},v_{i-1}\}$.

For $j\in\mathbb{N}$, the position of the vertices of $F$ is as follows:
\begin{align*}
&\alpha = (-1, -2);
\beta = (0, 0);
\gamma = (1, -2);\\
&a_{2j-1} = (4 j-2,0);
b_{2j-1} = (2-4 j,0);\\
&\quad c_{2j-1} = (1-4 j,-1);
d_{2j-1} = (1-4 j,-2);
e_{2j-1} = (-4 j,-2+\lambda_{2j+1});\\
&a_{2j} = (-1-4 j,0);
b_{2j} = (4 j-1,0);\\
&\quad c_{2j} = (4 j,-1);
d_{2j} = (4 j,-2);
e_{2j} = (4 j+1,-2+\lambda_{2j}).
\end{align*}
The variables $0\leq \lambda_i\ll 1$ are determined later. The vertex $h$ (not shown in the figure) is placed at $(0, h_y)$ for $h_y<0$ and $|h_y|$ big enough so that the polygon formed by $F$ is star-shaped.
For $0\leq i\leq k$, the feasibility area of $v_i$ is denoted by $A_i$ and the boundary of $A_i$ is referred to as $B_i$. We show that $B_k$ contains a $2^k$-exponentially-complex curve. The proof is by induction on $0\leq i\leq k$.

As the base of the induction we consider the boundary $B_0$ of vertex $v_0$ as shown in Figure~\ref{fig:graphs-complexity-result-appendix}b.
The feasibility area of $v_0$ is the upper quadrant formed by the lines $l(\alpha, \beta)$ and $l(\gamma, \beta)$.
Let $p=(-1/2, 1)$ be a point on the left boundary of $A_0$. Let $\C_0$ be the segment $s(\beta, p)$ not containing the point $p$.
It holds that $\C_0=\{(\frac{-t}{t+1}, \frac{2t}{t+1})\mid t\in\I=[0, 1)\}$.
The curve $\C_0$ is $0$-exponentially-complex.  An implicit equation of $l(\beta, p)$ is $y+2x=0$.
In the following we assume that the feasibility area $A_{i-1}$ of $v_{i-1}$ is partially bounded by an $(i\!-\!1)$-exponentially-complex curve satisfying additional invariants and prove that the feasibility area $A_i$ of $v_i$ is partially bounded by an $i$-exponentially-complex curve that also satisfies these invariants.
The invariants are given in three groups, the \emph{universal} invariants, holding after each inductive step, the \emph{even} and the \emph{odd} invariants holding after each even and odd step $i\geq 0$, respectively. 


~\\
\textbf{Universal invariants:}
	\begin{enumerate}[label=$\mathcal{UI}$.\arabic*, align=left]
		\item\label{cond:curve-appendix} $A_i$ is partially bounded by an $i$-exponentially-complex curve $\C_i=\{v_i(t)=(v_i^x(t), v_i^y(t))\mid t\in \I\}$, where $\I=[0, \I_\max)$ and $\I_\max>0$,
		\item\label{cond:increasing-appendix} $v_i^y(t)$ is strictly increasing for $t\in [0, \I_\max)$,
	\end{enumerate}
\noindent
\textbf{Odd invariants:}
	\begin{enumerate}[label=$\mathcal{OI}$.\arabic*, align=left]
		\item\label{cond:odd-zero-position-appendix} $a_{i+1}^x < v_i^x(0) < b_{i+1}^x$ and $v_i^y(0) = 0$,
		\item\label{cond:odd-Ai-left-appendix} $A_i$ is on the left of $\C_i$, i.e.\ $\forall t\in \I\setminus\{0\}, \exists \theta>0:$\\$(v_i^x(t)-\theta,v_i^y(t))\in A_i \land (v_i^x(t)+\theta,v_i^y(t))\notin A_i$,
		\item\label{cond:odd-nothing-right-appendix} Ray $q(a_{i+1}, v_i(0))$ intersects no point of $A_{i}$ to the right of $v_i(0)$, 
	\end{enumerate}

\noindent
\textbf{Even invariants:}
	\begin{enumerate}[label=$\mathcal{EI}$.\arabic*, align=left]
		\item\label{cond:even-zero-position-appendix} $b_{i+1}^x < v_i^x(0) < a_{i+1}^x$ and $v_i^y(0) = 0$,
		\item\label{cond:even-Ai-right-appendix} $A_i$ is on the right of $\C_i$,
		\item\label{cond:even-nothing-left-appendix} Ray $q(a_{i+1}, v_i(0))$ intersects no point of $A_{i}$ to the left of $v_i(0)$. 
	\end{enumerate}
We observe that universal and even invariants hold for the base case $i=0$. In the following we assume that $i-1$ is odd, the case where $i-1$ is even is symmetric.

\begin{figure}[ht]
	\centering
	\includegraphics[scale=0.8]{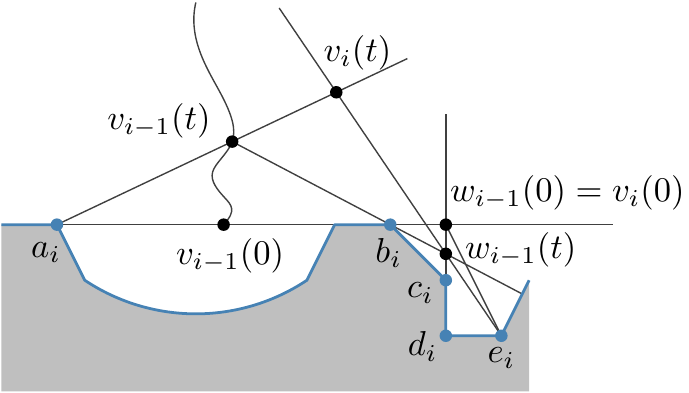}
	\caption{For $t>0$, the position of vertex $v_i(t)$ is above and to the left of $v_i(0)$.}
	\label{fig:recursive-proof-increasing-appendix}
\end{figure}

Let $\C_{i-1}=\left\{v_{i-1}(t) = \left(\frac{r(t)}{u(t)},\frac{s(t)}{u(t)}\right)\mid t\in\I\right\}$.
The position $w_{i-1}(t)$ of vertex $w_{i-1}$ is described as the intersection of the rays $q(v_{i-1}(t), b_i)$ and $q(d_i, c_i)$.
The position $v_i(t)$ of $v_i$ is described as $q(a_i, v_{i-1}(t)) \cap q(e_i, w_{i-1}(t))$. 

In the following we calculate the curve $\C_i$. We therefore calculate the position of $v_i$ as a function of $t$. Recall that the $y$-coordinate of point $e_i$ depends on variable $\lambda_i$. For the sake of simplicity we use $\lambda$ instead of $\lambda_i$ in the following paragraph. The calculations were made using Mathematica~\cite{mathematica}.
\begin{itemize}
	\item The equation of the line $l(v_{i-1}(t)),b_i)$ as a function of $x$ is: $y = \frac{-4 i s(t) +x s(t)}{r(t) - 4 i u(t)}$
	\item The position of $w_{i-1}$ as a function of $t$ is: $w_{i-1}(t)=\left(1 + 4 i, \frac{s(t)}{r(t) - 4 i u(t)}\right)$
	\item Line $l(a_i,v_{i-1}(t))$ as a function of $y$ is $ x = 2 - 4 i + y\frac{r(t) + 2 (-1 + 2 i) u(t)}{s(t)}$
	\item Line $l(e_i,w_{i-1}(t))$ as a function of $y$ is $x = \frac{(-2 + \lambda - 8 i + 4 \lambda i + y) (r(t) - 4 i u(t)) - 2 (1 + 2 i) s(t)}{(-2 + \lambda) r(t) - s(t) - 4 (-2 + \lambda) i u(t)}$
	\item Thus, the intersection of the lines $l(e_i,w_{i-1}(t))$ and $l(a_i,v_{i-1}(t))$ is described by the curve $\C_i=\{(\frac{r_i(t)}{u_i(t)},\frac{s_i(t)}{u_i(t)})\mid t\in \I\}$, where\\
\begin{itemize}
	\item $ r_i(t)= $ \small $-(-2 + \lambda) (1 + 4 i)$ \normalsize $ r(t)^2 + $ \small $4 (-1 - 2 i + 8 i^2)$ \normalsize $ s(t)u(t) + $ \small $8 (-1 - 2 i + 8 i^2) (-2 + \lambda) i$ \normalsize $ u^2(t) + $ \small $4$ \normalsize $ r(t) s(t)   + $ \small $  2 (-2 + \lambda) (1 + 4 i)$ \normalsize $ r(t)u(t)$,
	\item $s_i(t)=$ \small $-(-2 + \lambda) (-1 + 8 i) $ \normalsize $r(t)s(t)  + $ \small $8 i $ \normalsize $s^2(t) + $ \small $4 (-2 + \lambda) i (-1 + 8 i) $ \normalsize $s(t) u(t)$,
	\item $u_i(t)=$ \small $-(-2 + \lambda) $ \normalsize $r(t)^2 + $ \small $2 $ \normalsize $r(t) s(t) +$ \small $ 2(-2 + \lambda)  $ \normalsize $r(t)u(t) -$ \small $ 2 $ \normalsize $s(t)u(t) +$ \small $ 8 (-2 + \lambda) i (-1 + 2 i) $ \normalsize $u^2(t)$.
\end{itemize}

\end{itemize}
First, we observe that the denominators of $v_i^x(t)$ and $v_i^y(t)$ are equal. Second, each of $r_i(t)$, $s_i(t)$, $u_i(t)$ is quadratic in $r(t)$, $s(t)$ and $u(t)$.
Since by induction hypothesis, $C_{i-1}$ is an $i\!-\!1$-exponentially complex curve, i.e. $u(t)$, $s(t)$, $r(t)$ contain terms $t^{2^{i-1}}$\!, the curve $\C_i$ is $i$-exponentially complex, provided that the coefficients of highest degree do no cancel themselves out. 
Let therefore $\bar r$, $\bar s$ and $\bar u$ be the coefficients of $t^{2^{i-1}}$ in $r(t)$, $s(t)$ and $u(t)$.
The coefficients of $t^{2^{i}}$ cancel themselves out in the formula of $s_i(t)$ above if and only if
\small $-(-2 + \lambda) (-1 + 8 i) $ \normalsize $\bar r\bar s t^{2^i}  + $ \small $8 i $ \normalsize $\bar s^2 t^{2^i} + $ \small $4 (-2 + \lambda) i (-1 + 8 i) $ \normalsize $\bar s\bar u t^{2^i}=0$,
for each $t\in \I$, so we fix $t$ and solve for $\lambda$. This equation is linear in $\lambda$, so it only has one root. The same holds if we create an equation for $r(t)$ and $u(t)$.
So the coefficients of $t^{2^i}$ cancel themselves out if and only if
$\lambda$ equals one of the three roots.
We can chose $\lambda$ close to $0$, such that none of these equalities holds, which means that the coefficients of highest degree are non-zero. So Invariant~\ref{cond:curve} holds as long as $\C_i$ is part of $B_i$, which is proven later. In the following we prove that the invariants hold.

We first prove \underline{Invariant~\ref{cond:even-zero-position-appendix}}. By Invariant~\ref{cond:odd-zero-position-appendix}, we have $a_i^x<v_{i-1}^x(0)<b_i^x$ and $v_{i-1}^y(0)=0$. So the ray $q(a_i, v_{i-1}(0))$ is horizontal and points to the right.
The ray $q(d_i, c_i)$ is vertical and points upward and we have $b_i^x< c_i^x$. The point $w_{i-1}(0)$ is the intersection of those rays and therefore has coordinates $(c_i^x,0)$.
This shows that $w_{i-1}(0)\in q(a_i, v_{i-1}(0))$.
The point $v_i(t)$ is defined as $q(a_i, v_{i-1}(t)) \cap q(e_i, w_{i-1}(t))$, so we have that $w_{i-1}(0)=v_i(0)$, which concludes the proof of Invariant~\ref{cond:even-zero-position-appendix}.


In the following we prove \underline{Invariant~\ref{cond:increasing-appendix}} as shown in Figure~\ref{fig:recursive-proof-increasing-appendix}.
Recall that $a_i^y=b_i^y=0$. By Invariant~\ref{cond:odd-zero-position-appendix}, $v_{i-1}^y(0)=0$. Thus, by Invariant~\ref{cond:increasing-appendix}, $v_{i-1}^y(t)>0$ for any $t>0$. Also, by Invariant~\ref{cond:odd-zero-position-appendix}, $a_i^x < v_{i-1}^x(0) < b_i^x$.
Let $\I^{(1)}$ be the maximal half-open subinterval of $\I$ such that $0\in \I^{(1)}$ and $a_i^x < v_{i-1}^x(t) < b_i^x$ for each $t\in\I^{(1)}$. Interval $\I^{(1)}$ is not degenerated because $\C_{i-1}$ is continuous.
Thus the line $l(a_i,v_{i-1}(t))$ has a positive slope and line $l(v_{i-1}(t)),b_i)$ has a negative slope for $t\in \I^{(1)}\setminus\{0\}$. 

Recall that $\{w_{i-1}(t)\}=q(v_{i-1}(t), b_i)\cap q(d_i, c_i)$ and that $b_i^y=0$. Thus, $w_{i-1}^x(t)=c_i^x$ for any $t\in \I^{(1)}$;  $w_{i-1}^y(t)<0$ for any $t\in \I^{(1)}\setminus\{0\}$; and $w_{i-1}^y(0)=0$.
Recall that $\{v_i(t)\}=q(a_i, v_{i-1}(t)) \cap q(e_i, w_{i-1}(t))$. Due to $w_{i-1}(0)\in q(a_i, v_{i-1}(0))$, we have $v_i(0)=w_{i-1}(0)$. 
Therefore, $l(e_i,w_{i-1}(t)) \cap l(a_i,v_{i-1}(t))$ is to the left and above $v_{i}(0)$ for $t\in \I^{(1)}\setminus\{0\}$. 
Since $\C_i$ is continuous, there exists a half-open subinterval $\I^{(2)}\subseteq \I^{(1)}$ such that for $t\in \I^{(2)}$, $v_i^y(t)$ is strictly increasing, proving Invariant~\ref{cond:increasing-appendix}.


In the following we prove that the intersection $v_i(t)$ is on the boundary $B_i$ of $A_i$, thus proving \underline{Invariant~\ref{cond:curve-appendix}}.
We need the following claims:

\begin{claim}\label{claim:subinterval-no-other-intersection-appendix}
For some half open interval $\I^{(3)}\subseteq\I^{(2)}$ we have that for all $t\in \I^{(3)}$, $q(a_i, v_{i-1}(t))\cap A_i=\{v_{i-1}(t)\}$.
\end{claim}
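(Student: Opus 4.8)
The plan is to prove the equality $q(a_i,v_{i-1}(t))\cap A_i=\{v_{i-1}(t)\}$ in two halves — first that $v_{i-1}(t)$ really is a point of this intersection, and then that it is the only one — and to secure both simultaneously only after passing to a sufficiently small subinterval $\I^{(3)}\subseteq\I^{(2)}$. For the first half I would check that $v_{i-1}(t)$ lies on the ray $q(a_i,v_{i-1}(t))$ and in $A_i$. By construction the point $v_{i-1}(t)$ lies on the ray $q(a_i,v_{i-1}(t))$, so incidence is automatic. Membership I would obtain from the fact that $v_{i-1}(t)\in\C_{i-1}$ sits on the boundary $B_{i-1}$ (Invariant~$\mathcal{UI}.1$), hence is a limiting feasible position; using Invariants~$\mathcal{OI}.1$ and~$\mathcal{UI}.2$ to fix the rough location of $v_{i-1}(t)$ relative to $a_i$ on the $x$-axis (it lies strictly to one side of $a_i$, and rises as $t$ grows), I would choose $\I^{(3)}$ small enough that the qualitative picture established at $t=0$ persists and $v_{i-1}(t)$ remains a point of the intersection throughout.

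For the second half — uniqueness — the key tool is Invariant~$\mathcal{OI}.3$, which states precisely that the ray $q(a_i,v_{i-1}(0))$ meets no point of the feasibility region on the side of $v_{i-1}(0)$ away from $a_i$. Combined with the strict monotonicity of $v_{i-1}^y$ (Invariant~$\mathcal{UI}.2$) and with Invariant~$\mathcal{OI}.2$ (the region lies entirely on one side of $\C_{i-1}$), this forces the ray to leave the feasibility region immediately past $v_{i-1}(t)$. The complementary portion of the ray, lying between $a_i$ and $v_{i-1}(t)$, is cut off from the region by the fixed polygon $F$ and by the edges of the already-placed enclosed subgraph, so it contains no feasible point either. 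Shrinking $\I^{(3)}$ once more to absorb the finitely many parameter values at which the combinatorial picture degenerates, I would conclude that $v_{i-1}(t)$ is the unique point of $q(a_i,v_{i-1}(t))\cap A_i$ for every $t\in\I^{(3)}$.

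The hard part will be this uniqueness step. Because $A_i$ is bounded by the exponentially complex curve produced by the induction, I cannot compute $q(a_i,v_{i-1}(t))\cap A_i$ by an explicit intersection calculation; instead I must control the intersection purely from the qualitative invariants, keeping $v_{i-1}(t)$ as the distinguished point and never appealing to the algebraic form of $\C_i$. In particular, Invariant~$\mathcal{OI}.3$ only speaks about the single parameter value $t=0$, so the genuine work is to upgrade it to a whole subinterval of $t$ by a continuity and openness argument — which is exactly why the statement is asserted only on some $\I^{(3)}\subseteq\I^{(2)}$ rather than on all of $\I^{(2)}$.
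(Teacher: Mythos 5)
Your proposal takes essentially the same route as the paper's proof: both rest on Invariant $\mathcal{OI}.3$ at the single parameter value $t=0$, combined with $\mathcal{OI}.1$ (the location of $a_i$ relative to $v_{i-1}(0)$), $\mathcal{OI}.2$ ($A_{i-1}$ lies to the left of $\C_{i-1}$) and $\mathcal{UI}.2$ (strict monotonicity of $v_{i-1}^y$), followed by a continuity argument to pass to a small half-open subinterval $\I^{(3)}$. The paper's proof is exactly this, compressed into a few sentences, and you correctly identified the continuity upgrade from $t=0$ to an interval as the real content of the claim.

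Two slips deserve a flag, although neither damages the core argument. First, in this paper $q(a,b)$ denotes the ray that starts at $b$ and does \emph{not} contain $a$ (compare the uses of $q(d_i,c_i)$ and $q(p,w)$ elsewhere in the text), so $q(a_i,v_{i-1}(t))$ emanates from $v_{i-1}(t)$ away from $a_i$. Consequently the ``complementary portion of the ray, lying between $a_i$ and $v_{i-1}(t)$'' in your uniqueness step concerns points that are not on the ray at all; it is fortunate that this step is superfluous, because the justification you offer for it --- that the segment is cut off by $F$ and by the already-placed edges --- does not follow from the stated invariants and would constitute a genuine gap if that portion actually had to be excluded. Second, the $A_i$ in the claim statement is a typo in the paper for $A_{i-1}$: read literally the claim is false, since $v_i(t)\in A_i$ lies on $q(a_i,v_{i-1}(t))$ by the very definition of $v_i(t)$. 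The paper's own proof and figure caption argue about $A_{i-1}$, and your membership argument via $B_{i-1}$ implicitly makes the same correction, but it should be stated explicitly rather than left as a silent switch between $A_i$ and $A_{i-1}$.
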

The proof is visualized in Figure~\ref{fig:both-claims}a.
We know that the vertex $a_i$ is left of $v_{i-1}(0)$ by Invariant~\ref{cond:odd-zero-position-appendix} and  that $A_{i-1}$ is on the left of $\C_{i-1}$ by Invariant~\ref{cond:odd-Ai-left-appendix}.
Additionally, we have that $q(a_i, v_{i-1}(0))$ intersects $B_{i-1}$ at $v_{i-1}(0)$ and intersects no point of $A_{i-1}$ right of $v_{i-1}(0)$ due to Invariant~\ref{cond:odd-nothing-right-appendix}.
We also have that $v_{i-1}^y(t)$ is strictly increasing by Invariant~\ref{cond:increasing-appendix} proven above.
Now, due to $\C_i$ being continuous the claim holds.

\begin{figure}[ht]
	\centering
	\includegraphics[scale=0.8]{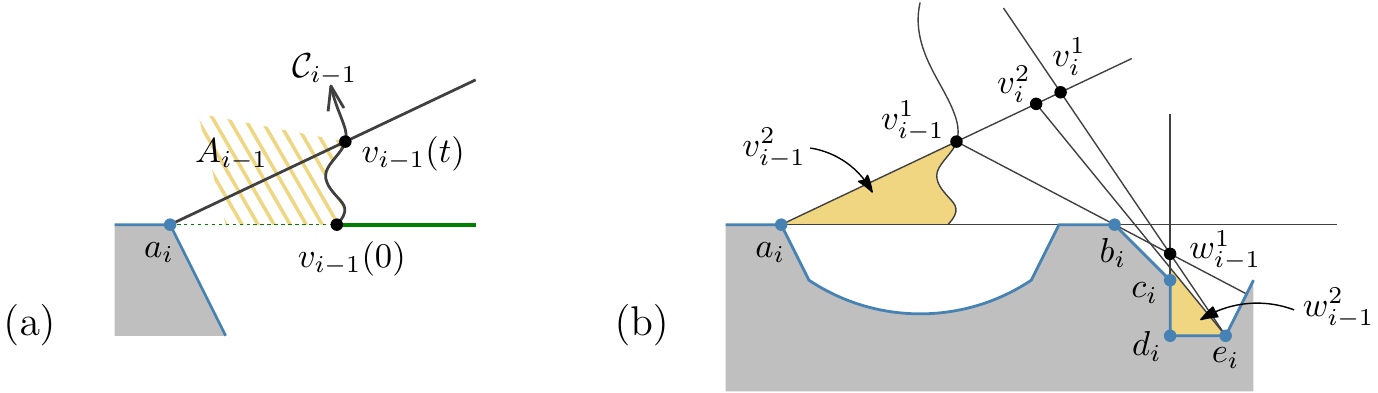}
	\caption{(a) The proof for Claim~\ref{claim:subinterval-no-other-intersection-appendix}. The region $A_{i-1}$ is left of $C_{i-1}$ and the green ray $q(a_i,v_{i-1}(0))$ does not intersect $A_{i-1}$. This means that for small $t$, $q(a_i,v_{i-1}(t))$ does not intersect $A_{i-1}$ either.
	(b) The construction used to prove Claim~\ref{claim:segment-empty-appendix}. We have $v_i^2\in s(v_{i-1}^1, v_i^1)$ and then $v_{i-1}^2$ is in the beige area on the left and $w_{i-1}^2$ is in the beige area on the right, thus they can't be connected by an edge.}
	\label{fig:both-claims}
\end{figure}

Building on that result we prove the following:
\begin{claim}\label{claim:segment-empty-appendix}
For $t\in \I^{(3)}$ we have that $s(v_{i-1}(t), v_i(t))\cap A_i=\{v_i(t)\}$.
\end{claim}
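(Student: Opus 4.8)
The plan is to prove the claim by contradiction, showing that every interior point of the segment is an infeasible position for $v_i$. So I would fix $t\in\I^{(3)}$, take a point $v_i^2$ lying strictly between $v_{i-1}(t)$ and $v_i(t)$ on $s(v_{i-1}(t),v_i(t))$, and assume towards a contradiction that $v_i^2\in A_i$. Feasibility of $v_i^2$ means there is a planar straight-line extension placing $v_i$ at $v_i^2$; let $v_{i-1}^2$ and $w_{i-1}^2$ be the positions it assigns to the enclosed vertices $v_{i-1}$ and $w_{i-1}$. The whole argument then rests on the two edges $\{v_i,a_i\}$ and $\{v_i,e_i\}$ incident to $v_i$ and on the edge $\{v_{i-1},w_{i-1}\}$ that must join the two enclosed vertices, as illustrated in Figure~\ref{fig:both-claims}b.

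First I would locate $v_{i-1}^2$. Since $v_i(t)\in q(a_i,v_{i-1}(t))$, the three points $a_i$, $v_{i-1}(t)$, $v_i(t)$ are collinear, so the whole segment $s(v_{i-1}(t),v_i(t))$ lies on the line $l(a_i,v_{i-1}(t))$, and in the natural order $a_i,v_{i-1}(t),v_i^2,v_i(t)$ the point $v_{i-1}(t)$ lies in the interior of $s(a_i,v_i^2)$. But $s(a_i,v_i^2)$ is exactly the edge $\{v_i,a_i\}$ in the drawing under consideration, so $v_{i-1}^2$ cannot coincide with $v_{i-1}(t)$ without lying on that edge. Since $v_{i-1}^2$ is a feasible position for $v_{i-1}$, it lies in $A_{i-1}$, and by Claim~\ref{claim:subinterval-no-other-intersection-appendix} the ray $q(a_i,v_{i-1}(t))$ meets $A_{i-1}$ only at $v_{i-1}(t)$; hence $v_{i-1}^2$ lies strictly off the line $l(a_i,v_{i-1}(t))$. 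Combining this with Invariant~\ref{cond:odd-Ai-left-appendix} and Invariant~\ref{cond:odd-nothing-right-appendix} confines $v_{i-1}^2$ to the region on the left of $l(a_i,v_{i-1}(t))$ --- the left ``beige'' region of Figure~\ref{fig:both-claims}b.

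The symmetric step locates $w_{i-1}^2$. The edge $\{v_i,e_i\}$ is now drawn as $s(e_i,v_i^2)$, and because $v_i^2$ lies on the $v_{i-1}(t)$-side of the ray $q(e_i,w_{i-1}(t))$ that carries the extreme position $v_i(t)$, the same grazing and enclosure reasoning (the analogue of Claim~\ref{claim:subinterval-no-other-intersection-appendix} for $w_{i-1}$, together with the shape of the unit and the positions of $c_i$ and $d_i$) forces $w_{i-1}^2$ into the corresponding right ``beige'' region, on the opposite side of the ``neck'' of $F$ formed by the vertices $b_i$ and $c_i$.

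The decisive step is then to observe that in the extreme configuration the edge $\{v_{i-1},w_{i-1}\}$ grazes $b_i$ --- this is exactly how $w_{i-1}(t)$ was defined, as $w_{i-1}(t)\in q(v_{i-1}(t),b_i)$ --- so $v_{i-1}(t)$ and $w_{i-1}(t)$ sit on opposite sides of the boundary vertex $b_i$ with their connecting segment just touching it. Pushing $v_{i-1}^2$ strictly into the left region and $w_{i-1}^2$ strictly into the right region therefore makes every straight segment $s(v_{i-1}^2,w_{i-1}^2)$ cross the boundary of $F$ near $b_i$ (or one of the already-drawn edges $\{v_i,a_i\}$, $\{v_i,e_i\}$), so the edge $\{v_{i-1},w_{i-1}\}$ cannot be drawn planarly. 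This contradicts the assumed feasibility of $v_i^2$, and since $v_i(t)$ itself is feasible by its very construction as a mutually consistent extreme placement, it yields $s(v_{i-1}(t),v_i(t))\cap A_i=\{v_i(t)\}$. I expect the main obstacle to be making the two ``beige'' regions precise and verifying, against the explicit coordinates of $b_i,c_i,d_i,e_i$, that no straight edge can bridge them without crossing $\partial F$; the collinearity observation and the two applications of Claim~\ref{claim:subinterval-no-other-intersection-appendix} are comparatively routine.
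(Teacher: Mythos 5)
Your proposal follows essentially the same route as the paper's proof: assume an interior point $v_i^2$ of the segment is feasible, use planarity of the edges $\{v_i,a_i\}$, $\{v_i,e_i\}$, $\{w_{i-1},d_i\}$ together with Claim~\ref{claim:subinterval-no-other-intersection-appendix} and the odd invariants to force the positions $v_{i-1}^2$ and $w_{i-1}^2$ into the two separated regions of Figure~\ref{fig:both-claims}b, and derive a contradiction from the edge $\{v_{i-1},w_{i-1}\}$. The only difference is one of precision: where you say every segment $s(v_{i-1}^2,w_{i-1}^2)$ must ``cross the boundary of $F$ near $b_i$,'' the paper formalizes this by showing both forced positions lie strictly below the line $l(v_{i-1}(t),w_{i-1}(t))$, which contains $b_i$, so that $b_i$ lies strictly above the connecting segment --- no separate ``analogue of Claim~\ref{claim:subinterval-no-other-intersection-appendix} for $w_{i-1}$'' is needed, since the constraints right of $l(c_i,d_i)$ and below $l(v_i^2,e_i)$ come directly from planarity.
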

The proof of the claim is visualized in Figure~\ref{fig:both-claims}b.
We already know that $v_i(t)\in A_i$ for $t\in \I$.
In the following we denote the position of the vertices $v_{i-1}(t)$, $w_{i-1}(t)$ and $v_i(t)$ by $v_{i-1}^1$, $w_{i-1}^1$ and $v_i^1$.
We assume for the sake of contradiction that there exists a position $v_i^2\in s(v_{i-1}^1, v_i^1)\setminus \{v_i^1\}$ of vertex $v_i$, such that $v_i^2$ is in $A_i$.
We try to find valid positions $v_{i-1}^2$ and $w_{i-1}^2$ for the vertices $v_{i-1}$ and $w_{i-1}$.
The position $w_{i-1}^2$ is to the right of $l(c_i, d_i)$ and below $l(v_i^2, e_i)$. So $w_{i-1}^2$ is strictly below the line $l(v_{i-1}^1, w_{i-1}^1)$.
The position $v_{i-1}^2$ is below the line $l(a_i, v_i^2)$ and to the left of $\C_{i-1}$ by Claim~\ref{claim:subinterval-no-other-intersection-appendix}. This means that $v_{i-1}^2$ is below the line $l(v_{i-1}^1, w_{i-1}^1)$.
Due to $b_i \in l(v_{i-1}^1, w_{i-1}^1)$, we have that $b_i$ is strictly above $l(v_{i-1}^2, w_{i-1}^2)$, which is a contradiction because the vertices $v_{i-1}$ and $w_{i-1}$ are connected by an edge. This proves the claim.

Now we can quickly show that $v_i(t)\in B_i$ for $t\in\I^{(3)}$, therefore proving Invariant~\ref{cond:curve-appendix}.
We assume for the sake of contradiction that $v_i(t)\notin B_i$.
Then there is a region $R$ around $v_i(t)$ with $R\subseteq A_i$. It follows that $s(v_{i-1}(t), v_i(t))\cap R\setminus\{v_i(t)\}\neq \emptyset$, which is a contradiction to Claim~\ref{claim:segment-empty-appendix}.

We prove \underline{Invariant~\ref{cond:even-nothing-left-appendix}}.
We know that $v_i(0)=(c_i^x, 0)$. Using Invariant~\ref{cond:odd-zero-position-appendix} and the coordinates of the vertices of $F$, we have that $v_{i-1}^x(0)<b_i^x< v_i^x(0)<a_{i+1}^x$ and that points $v_{i-1}(0)$, $b_i$, $v_i(0)$ and $a_{i+1}$ are on the x-axis.
So $v_{i-1}(0)\in q(a_{i+1}, v_i(0))$.
Using $s(v_{i-1}(t), v_i(t))\cap A_i=\{v_i(t)\}$ from Claim~\ref{claim:segment-empty-appendix} we have
\linebreak
$q(a_{i+1}, v_i(0))\cap A_i\setminus \{v_i(0)\}=q(a_{i+1}, v_{i-1}(0))\cap A_i$. However, for $v_i^y=0$, $v_i$ is on the right of the line $l(e_i, b_i)$, which means that $q(a_{i+1}, v_{i-1}(0))\cap A_i=\emptyset$. This proves Invariant~\ref{cond:even-nothing-left-appendix}.

Lastly we prove \underline{Invariant~\ref{cond:even-Ai-right-appendix}}.
Due to Claim~\ref{claim:segment-empty-appendix}, we know that for each $t\in\I^{(3)}$, we have $s(v_{i-1}(t), v_i(t))\cap A_i=\{v_i(t)\}$.
Additionally, we know that for a given position of $v_{i-1}$ and $w_{i-1}$, the vertex $v_i$ can be placed in the upper right quadrant formed by the lines $l(a_i,v_{i-1})$ and $l(e_i,w_{i-1})$. This means that the ray $q(v_{i-1}(t), v_i(t))$ is part of $A_i$.
For $t\in\I^{(3)}$, we have $a_i^x < v_{i-1}^x(t)$ due to the definition of $\I^{(1)}$ and the fact that $\I^{(3)}\subseteq\I^{(1)}$. Since $v_i(t)\in q(a_i, v_{i-1}(t))$, it holds that $a_i^x < v_i^x(t)$.
It follows that the region $A_i$ is right of the curve $\C_i$.
\qed
\end{proof}

\end{document}